\newcommand{\two}{${}^2$}
\newcommand{\twoBD}{${}^2_\mathrm{BD}$}
\newcommand{\const}{\mathrm{const}}
\newcommand{\Uni}{\mathrm{Uni}}
\newcommand{\voc}{\mathrm{voc}}
\newcommand{\universal}{\mathrm{\forall\forall}}
\newcommand{\existential}{\mathrm{\forall\exists}}
\newcommand{\As}{\mathrm{S}}
\newcommand{\att}{\textbf{att}{}}
\newcommand{\domunb}{\textbf{dom}${}^{\mathrm{U}}${}}
\newcommand{\dom}{\textbf{dom}}
\newcommand{\dombind}[1]{\textbf{dom}_{#1}^{\mathrm{B}}{}}
\newcommand{\domunbMM}{\textbf{dom}^{\mathrm{U}}{}}
\newcommand{\Bl}{\textbf{bool}}
\newcommand{\Dom}{\textbf{dom}}
\newcommand{\codes}{\textbf{codes}}
\newcommand{\Blshort}{\textbf{b}}
\newcommand{\Domshort}{\textbf{d}}
\newcommand{\codesshort}{\textbf{c}}
\newcommand{\relnames}{\textbf{relnames}{}}
\newcommand{\sqlvars}{\textbf{SQLvars}{}}
\newcommand{\connames}{\textbf{connames}{}}
\newcommand{\srt}{\textbf{sort}}
\newcommand{\e}[1]{\textbf{\textit{#1}}}
\newcommand{\cI}{\mathcal{I}}
\newcommand{\bfR}{\mathbf{R}}
\newcommand{\ar}{\mathrm{ar}}
\newcommand{\mcA}{\mathcal{A}}
\newcommand{\nil}{\mathtt{nil}}
\newcommand{\pre}{\mathrm{pre}}
\newcommand{\post}{\mathrm{post}}
\newcommand{\bnd}{\mathit{bnd}}
\definecolor{dkgreen}{rgb}{0,.6,0}
\definecolor{dkblue}{rgb}{0,0,.6}
\definecolor{dkyellow}{cmyk}{0.3,0.3,1,.5}
\definecolor{dark-gray}{gray}{0.5}
\newtheorem{proposition}{Proposition}
\title{On the automated verification of web applications with embedded SQL}
\author[1]{Shachar Itzhaky}
\author[2]{Tomer Kotek}
\author[3]{Noam Rinetzky}
\author[3]{Mooly Sagiv}
\author[3]{Orr Tamir}
\author[2]{Helmut Veith}
\author[2]{Florian Zuleger}
\affil[1]{Massachusetts Institute of Technology
}
\affil[2]{Vienna University of Technology
 }
\affil[3]{Tel Aviv University
 }
\authorrunning{S. Itzhaky, T. Kotek, N. Rinetzky, M. Sagiv,  O. Tamir, H. Veith, and F. Zuleger} 
\subjclass{D.3.2 Database Management Languages, F.3.1 Specifying and Verifying and Reasoning about Programs}
\keywords{SQL; Scripting language; Web services; Program verification; Two-variable fragment of First Order logic; Decidability; Reasoning}
\begin{document}

\maketitle

\begin{abstract}
A large number of web applications is based on a relational database
together with a program, typically a script, that enables the user to
interact with the database through embedded SQL queries and commands. In this
paper, we introduce a method for formal automated verification of such
systems which connects database theory to mainstream program
analysis. We identify a fragment of SQL which captures the behavior
of the queries in our case studies, is algorithmically decidable, and
facilitates the construction of weakest preconditions. Thus, we can
integrate the analysis of SQL queries into a program analysis tool
chain. To this end, we implement a new decision procedure for the SQL
fragment that we introduce. We demonstrate practical applicability of
our results with three case studies, a web administrator, a
simple firewall, and a conference management system. 
 \end{abstract}

\newcommand{\sqlFragment}{{SmpSQL}}
\newcommand{\progLang}{{SmpSL}}
\newcommand{\FOTwo}{{FO\two}}

\newcommand{\nwlEmail}{{NS}}
\newcommand{\nwlAdmin}{\texttt{T}}
\newcommand{\nwlEmailGhost}{\ensuremath{\nwlEmail_\mathit{gh}}}
\newcommand{\nwlAdminGhost}{\ensuremath{\nwlAdmin_\mathit{gh}}}
\newcommand{\nwl}{nwl}
\newcommand{\mail}{user}
\newcommand{\admin}{admin}
\newcommand{\subscribed}{subscribed}
\newcommand{\confirmcode}{code}

\newcommand{\activeUser}{\ensuremath{\mathit{active}}}
\newcommand{\newsl}{n}
\newcommand{\user}{u}
\newcommand{\code}{c}
\newcommand{\cd}{cd}
\newcommand{\newcode}{\mathtt{new}\texttt{\_}\mathtt{code}}

\newcommand{\subscribeghost}{sub_{gh}}

\newcommand{\unsubscribe}{\texttt{unsubscribe}}
\newcommand{\subscribe}{\texttt{subscribe}}
\newcommand{\confirm}{\texttt{confirm}}
\newcommand{\condition}{\ensuremath{\varphi}}
\newcommand{\precondition}{\condition_{pre}}
\newcommand{\postcondition}{\condition_{post}}

\newcommand\SELECT{\texttt{SELECT}~}
\newcommand\INSERT{\texttt{INSERT}~}
\newcommand\DELETE{\texttt{DELETE}}
\newcommand\UPDATE{\texttt{UPDATE}~}
\newcommand\FROM{~\texttt{FROM}~}
\newcommand\INTO{~\texttt{INTO}~}
\newcommand\SET{~\texttt{SET}~}
\newcommand\WHEREnospace{\texttt{WHERE}~}
\newcommand\WHERE{~\texttt{WHERE}~}
\newcommand\IN{~\texttt{IN}~}
\newcommand\AND{~\texttt{AND}~}
\newcommand\OR{~\texttt{OR}~}
\newcommand\NOT{\texttt{NOT}~}
\newcommand\CHOOSE{\texttt{CHOOSE}~}

\newcommand\cond{\mathit{cond}}
\newcommand\vempty{\mathtt{empty}}

\newcommand\semp[1]{[\![{#1}]\!]}
\newcommand\wlp[1]{\mathrm{wp}\semp{#1}}
\newcommand\eqdef{\;\widehat{=}\;}

\newcommand{\mypara}[1]{\subparagraph*{#1.}}



\section{Introduction}\label{se:intro}

Web applications are often written in a scripting language such as PHP and store their data in a relational database
which they access using SQL queries and data-manipulating commands~\cite{williams2004web}. 
This combination facilitates fast development of web applications, which exploit the reliability and efficiency of the underlying database engine and use the flexibility of the script language to interact with the user.
While the database engine is typically a mature software product with few if any severe errors, 
the script with the embedded SQL statements does not meet the same standards of quality.

With a few exceptions (such as~\cite{DBLP:journals/sigmod/DeutschHV14,DBLP:journals/vldb/FernandezFLS00}) the systematic analysis of programs with embedded-SQL statements has been a blind spot in both the database and the computer-aided verification community.
The verification community has mostly studied the analysis of programs which fall into two classes: programs with (i) numeric variables and complex control structure, (ii) complex pointer structures and objects; however, the modeling of data and their relationships has not received the same attention.
Research in the database community on the other hand has traditionally focused on correct design of databases rather than correct use of databases.

This paper lays the ground for an interdisciplinary methodology which extends the realm of program analysis to programs with embedded SQL.
Since the seminal papers of Hoare, the first step for developing program analysis techniques is a precise mathematical framework for defining programming semantics and correctness.
In this paper we develop a Hoare logic for a practically useful but simple fragment of SQL, called \sqlFragment, and a simple scripting language, called \progLang, which has access to \sqlFragment\ statements.
Specifically, we describe a decidable logic for formulating specifications and develop a weakest precondition calculus for \progLang\ programs; thus our Hoare logic allows to automatically discharge verification conditions.
When analyzing \progLang\ programs, we treat SQL as a black box library whose semantics is given by database theory.
Thus we achieve verification results relative to the correctness of the underlying database engine.

We recall from Codd's theorem~\cite{codd1972relational} that the core of SQL is equivalent in expressive power to first-order logic FO.
Thus, it follows from Trakhtenbrot's theorem~\cite{ar:Trakhtenbrot} that it is undecidable whether an SQL query guarantees a given post condition.
We have therefore chosen our SQL fragment \sqlFragment\ such that it captures an interesting class of SQL commands, 
but corresponds to a decidable fragment of first-order logic, namely FO\twoBD, the restriction of first-order logic in which all variables aside from two range over 
fixed finite domains called \emph{bounded domains}. 
The decidability of the finite satisfiability problem of FO\twoBD{} follows from that of FO\two, the fragment of first-order logic which uses only two variables. 
Although the decidability of FO\two\ was shown by Mortimer~\cite{ar:Mortimer} and a complexity-wise tight decision procedure was later described by Gr\"{a}del, Kolaitis and Vardi~\cite{GKV}, 
we provide the first efficient implementation of finite satisfiability of FO\two.

We illustrate our methodology on the example of a simple web administration tool based on~\cite{panda}. 
The PANDA web administrator is a simple public domain web administration tool written in PHP.
We describe in Section~\ref{se:running} how the core mailing-list administration functionality falls into the scope of \progLang.
We formulate a specification consisting of a database invariant and pre- and postconditions.
Our framework allows us to automatically check the correctness of such specifications using our own FO\twoBD\ reasoning tool. 


\paragraph*{Main contributions}

\begin{enumerate}

\item We define \sqlFragment, an SQL fragment which is contained in FO\twoBD.

\item We define a a simple imperative script language \progLang\ with embedded \sqlFragment\ statements. 

\item We give a construction for weakest preconditions in FO\twoBD\ for \progLang.

\item We implemented the weakest precondition computation for \progLang.

\item We implemented a decision procedure for FO\twoBD. 
The procedure is based on the decidability and NEXPTIME completeness result 
for FO\two\ by~\cite{GKV}, but we use a more involved algorithm which reduces the problem to a SAT solver
and is optimized for performance. 
\end{enumerate}

We evaluate our methodology on three applications: a web administrator, a simple firewall, 
and a conference management system. 
We compared our tool with Z3~\cite{de2008z3}, currently the most advanced general-purpose SMT solver with (limited) support for quantifiers. 
    In general, our tool performs better than Z3  in several examples for checking the validity of verification conditions of \progLang\ programs. 
    However, our tool and Z3 have complementary advantages: 
    Z3 does well for unsatisfiable instances while our tool performs better on satisfiable instances.
We performed large experiments with custom-made blown up versions of the web administrator and the firewall examples, which suggest that our tool scales well.
Moreover, we tested the scalability of our approach by comparing of our underlying FO\two\ solver with three solvers on a set of benchmarks 
we assembled inspired by combinatorial problems. The solvers we tested against are
Z3, the SMT solver CVC4~\cite{CVC4}, and the model checker Nitpick~\cite{blanchette2010nitpick}. Our solver outperformed each of these solvers on some of the benchmarks. 

\section{Running Example}
\label{se:running}

We introduce our approach on the example of a simple web service. The example is a translation from PHP with embedded SQL commands into \progLang\
of code excerpts 
from the Panda web-administrator.
The web service provides several services implemented in dedicated functions for subscribing a user to a newsletter, 
deleting a newsletter, making a user an admin of a newsletter, sending emails to all subscribed users of a newsletter, etc.
We illustrate our verification methodology by exposing an error in the Panda web-administrator. The verification methodology we envision in this paper consists of 
(1) maintaining database invariants and (2) verifying a contract specification for each function of the web service. 

The database contains several tables including $\mathit{\nwlEmail}=\mathit{NewsletterSubscription}$
with attributes $\mathit{\nwl}$, $\mathit{\mail}$, $\mathit{\subscribed}$ and $\mathit{\confirmcode}$. 
The database is a structure whose universe is partitioned into three sets: $\Dom^{\mathrm{U}}$, $\Bl^{\mathrm{B}}$, and $\codes^{\mathrm{B}}$. 
The attributes $\mathit{\nwl}$ and $\mathit{\mail}$ range over the finite set $\Dom^\mathrm{U}$, the attribute $\mathit{\subscribed}$
ranges over $\Bl^{\mathrm{B}}=\{\mathit{true},\mathit{false}\}$, and the attribute $\mathit{\confirmcode}$ ranges over the fixed finite set $\codes^{\mathrm{B}}$. 
The superscripts in $\Dom^{\mathrm{U}}$, $\Bl^{\mathrm{B}}$, and $\codes^{\mathrm{B}}$ serve to indicate that the domain $\Dom^{\mathrm{U}}$ is unbounded, while the Boolean domain 
and the domain of codes are bounded (i.e.\ of fixed finite size). 
When $s=\mathit{true}$, $(n,u,s,c) \in \mathit{\nwlEmail}$ signifies that the \emph{user} $u$ is \emph{subscribed} to the \emph{newsletter} $n$. 
The process of being (un)subscribed from/to a newsletter
requires an intermediary confirmation step in which the confirm code $c$ plays a role.

Figure~\ref{fig:code} provides the functions $\subscribe$, $\unsubscribe$, and $\confirm$ translated manually into \progLang. \footnote{The reader may
wish to compare the \progLang\ implementation of $\confirm$ to the PHP implementation in PANDA, provided in Appendix~\ref{app:panda:confirm}. 
}
The comments in quotations {\tt \color{dark-gray} // ``$\ldots$''} originate from the PHP source code. 
The intended use of these functions is as follows: 
In order to subscribe a user $u$ to a newsletter $n$, the function $\subscribe$  
is called with inputs $n$ and $u$ (for example by a web interface operated by the newsletter admin or by the user). 
$\subscribe$ stores the tuple $(n,u,\mathit{false},new\_code)$ in $\mathit{\nwlEmail}$, where $new\_code$ is a confirmation code which does not 
occur in the database, 
and an email containing a confirmation URL is sent to the user $u$. 
Visiting the URL triggers a call to $\confirm$ with input $new\_code$, which subscribes $u$ to $n$ by replacing 
the tuple $(n,u,\mathit{false},new\_code)$ of $\mathit{\nwlEmail}$ to with $(n,u,\mathit{true},\nil)$. 
For $\unsubscribe$ the process is similar, and crucially, $\unsubscribe$ uses the same $\confirm$ function.
$\confirm$ decides whether to subscribe or unsubscribe according to whether $n$ is currently subscribed to $u$.
The \CHOOSE\ command selects one row non-deterministically.

The database preserves the invariant 
\begin{equation}\label{eq:Inv}
\begin{array}{lll}
\mathit{Inv}
&=& \forall_{\Domshort} x,y.\, \forall_{\Blshort} s_1,s_2.\, \forall_{\codesshort} c_1,c_2.\,\left(
 (s_1=s_2 \land c_1=c_2)\lor\bigvee_{i=1,2}\neg\mathit{\nwlEmail}(x,y,s_i,c_i)
 \right)
\end{array}
\end{equation}
$\mathit{Inv}$
says that the pair $(n,u)$ of newsletter and user is a key of the relation $\mathit{\nwlEmail}$. 
The subscripts of the quantifiers denote the domains over which the quantified variables range. 
In our verification methodology we add invariants as additional conjuncts to the pre- and post-conditions of every function.
In this way invariants strengthen the pre-conditions and can be used to prove the post-conditions of the functions.
On the other hand, the post-conditions require to re-establish the validity of the invariants.

\begin{figure}\tt
\footnotesize
\noindent
\subscribe(\newsl,\user): \\
\hspace*{0.5cm} A = \SELECT *\FROM \nwlEmail \WHERE \mail\ = \user\ AND \nwl\ = \newsl;\\
\hspace*{0.5cm} if (A != empty)  exit; {\color{dark-gray}// "This address is already registered to this newsletter."} \\
\hspace*{0.5cm} \INSERT (\newsl,\user,false,$\newcode$)\INTO \nwlEmail;\\
\hspace*{0.5cm} {\color{dark-gray}// Send confirmation email to \user}\\
\ \\
\unsubscribe(\newsl,\user):  \\
\hspace*{0.5cm} A = \SELECT *\FROM \nwlEmail \WHERE \mail\ = \user\ AND \nwl\ = \newsl;\\
\hspace*{0.5cm} if (A = empty) exit; {\color{dark-gray}// "This address is not registered to this newsletter."} \\
\hspace*{0.5cm} \UPDATE \nwlEmail \SET \confirmcode\ = $\newcode$\WHERE \mail\ = \user\ AND \nwl\ = \newsl\\
\hspace*{0.5cm} {\color{dark-gray}// Send confirmation email to \user}\\
\ \\
\confirm(\cd): \\
\hspace*{0.5cm} A = \SELECT \subscribe \FROM \nwlEmail \WHERE \confirmcode\ = \cd;\\
\hspace*{0.5cm} if (A = empty) exit;  {\color{dark-gray}//"No such code"} \\
\hspace*{0.5cm} s1\ = \CHOOSE A;\\
\hspace*{0.5cm} if (s1 = false) 
        \UPDATE \nwlEmail \SET \subscribed\ = true, \confirmcode\ = $\nil$\WHERE \confirmcode\ = \cd\\
\hspace*{0.5cm}  else\ \DELETE \FROM \nwlEmail \WHERE \confirmcode\ = \cd;\\
\caption{\label{fig:code} Running Example: \progLang\ code. }
\end{figure}
Figure~\ref{fig:conditions} provides pre- and post-conditions $\pre_{\mathtt{f}}$ and $\post_{\mathtt{f}}$ for each of the three functions $\mathtt{f}$.
The relation names $\Domshort$, $\Blshort$, and $\codesshort$ are interpreted as the sets  $\Dom^{\mathrm{U}}$, $\Bl^{\mathrm{B}}$, and $\codes^{\mathrm{B}}$, respectively. 
Proving correctness amounts to proving the correctness of each of the \emph{Hoare triples} 
$\{\pre_{\mathtt{f}}\land \mathit{Inv}\}\ \mathtt{f}\ \{\post_{\mathtt{f}} \land \mathit{Inv}\}$. Each Hoare triple specifies a contract:
after every execution of $\mathtt{f}$, the condition $\post_{\mathtt{f}} \land \mathit{Inv}$ should be satisfied if
$\pre_{\mathtt{f}} \land \mathit{Inv}$ was satisfied before executing $\mathtt{f}$.
$\pre_\subscribe$ and $\pre_\unsubscribe$ express that $\mathit{new\_code}$ is an unused non-nil code and that $\mathit{\nwlEmailGhost}$
is equal to $\mathit{\nwlEmail}$. $\mathit{\nwlEmailGhost}$ is a \emph{ghost table}, used in the post-conditions to relate the state before the execution of the function to the state
after the execution. $\mathit{\nwlEmailGhost}$ does not occur in the functions and is not modified. 
$\post_\subscribe$ and $\post_\unsubscribe$ express that $\mathit{\nwlEmail}$ is obtained from $\mathit{\nwlEmailGhost}$ by inserting or updating
a row satisfying $\tt \mail\ = \user\ \mathtt{AND}\ \nwl\ = \newsl$
whenever the \texttt{exit} command is not executed. 
The intended behavior of $\confirm$ depends on which function created  $\cd$.
$\pre_\confirm$ introduces a Boolean ghost variable $\subscribeghost$
whose value is true (respectively false) if $\cd$ was generated as a new code in \subscribe\ (respectively \unsubscribe). 
$\subscribeghost$
does not occur in $\confirm$. $\post_\confirm$ express that, when $\subscribeghost$ is true, $\mathit{\nwlEmail}$ is obtained from $\mathit{\nwlEmail}$ by 
toggling the value of the column $\it \subscribed$ from false to true in the $\mathit{\nwlEmailGhost}$ row whose confirm code is $\cd$; 
when $\subscribeghost$ is false, $\mathit{\nwlEmail}$ is obtained from $\mathit{\nwlEmailGhost}$ by deleting the row with confirm code $\cd$. 

Let us now describe the error which prevents $\confirm$ from satisfying its specification. 
Consider the following scenario. First, $\subscribe$ is called and then $\unsubscribe$, both with the same input $\newsl$ and $\user$.
Two confirm codes are created: $c_{\mathtt{s}}$ by $\subscribe$ and $c_{\mathtt{u}}$ by $\unsubscribe$. 
At this point, $\mathit{\nwlEmail}$ contains a single row for the newsletter $n$ and user $u$ namely $(\newsl,\user,\mathit{false},c_\mathtt{u})$. 
The user receives two confirmation emails containing the codes $c_\mathtt{s}$ and $c_\mathtt{u}$. 
Clicking on the confirmation URL for $c_\mathtt{s}$ (i.e.\ running $\confirm(c_\mathtt{s})$) 
has no effect since $c_\mathtt{s}$ does not occur in the database. 
However, clicking on the confirmation URL for $c_\mathtt{s}$ results in subscribing $\user$ to $\newsl$. This is an error, since 
confirming a code created in $\unsubscribe$ should not lead to a subscription. 

Our tool automatically checks whether the program satisfies its specification. 
If not, the programmer or verification engineer may try to
refine the specification to adhere more closely to the intended behavior (e.g.\ by adding an invariant). 
In this case, the program is in fact incorrect, so no meaningful correct specification can be written for it.

In Section~\ref{se:verification} we describe a \emph{weakest-precondition calculus} $\wlp{\cdot}$ 
which allows us to automatically derive the weakest precondition for a post-condition with regard to a \progLang\ program.
For our example functions $\mathtt{f}$, $\wlp{\cdot}$ allows us to automatically derive
$\wlp{\mathtt{f}}\post_\mathtt{f}$.
The basic property of the weakest precondition is that $\post_\mathtt{f}$ holds after $\mathtt{f}$ has executed iff
$\wlp{\mathtt{f}}\post_\mathtt{f}$ held immediately at the start of the execution. 
It then remains to show that the pre-condition $\pre_\mathtt{f}$ implies  $\wlp{\mathtt{f}}\post_\mathtt{f}$.
This amounts to checking the validity of the verification conditions $\mathit{VC}_\mathtt{f}=\pre_\mathtt{f}\rightarrow \wlp{\mathtt{f}}\post_\mathtt{f}$.

Our reasoner for FO\two\ sentences is the back-end for our verification tool. 
The specification in this example is all in FO\twoBD.
The weakest precondition of a \progLang\ program applied to a FO\twoBD\ sentence gives again a FO\twoBD sentence. Hence $\mathit{VC}_\mathtt{f}$
are all in FO\twoBD. 
Automatically deciding the validity of FO\twoBD\ sentences using our FO\two\ decision procedure is described in Section~\ref{se:FO2}.
Recall that $\codes^{\mathrm{B}}$ is of fixed finite size. Here
$|\codes^{\mathrm{B}}| = 3$ is sufficient to detect the error. Observe that the same confirm code may be reused
once it is replaced with $\nil$ in $\confirm$, so the size of the database is unbounded. 
The size of $\codes^{\mathrm{B}}$ must be chosen manually when applying our automatic tool.

A simple way to correct the error in \confirm\ is by adding $\subscribeghost$ as a second argument of
\confirm\ and replacing $\tt if\ (s1 = false)\ \cdots$ with 
$\tt if\ (\subscribeghost = false)\ \cdots$. Since $s_1$ is no longer used, the CHOOSE command can be deleted. 
The value of $\subscribeghost$ received by \confirm\ is set correctly by \subscribe\ and \unsubscribe. 
With these changes, the error is fixed and \confirm\ satisfies its specification. 
In the scenario from above, the call to \confirm\ with $c_\mathtt{s}$ and $\subscribeghost = \mathit{true}$ 
leaves the database unchanged, while the call to  
\confirm\ with $c_\mathtt{u}$ and $\subscribeghost = \mathit{false}$ deletes the row $(n,u,\mathit{false},c_\mathtt{u})$. 


\begin{figure}
\small
\[
 \begin{array}{lll}
\pre_\mathtt{g} &=& \mathit{\nwlEmail} = \mathit{\nwlEmailGhost} \land \mathrm{good}\mbox{-}\mathrm{code}(\mathit{new\_code})
\\
\mathrm{good}\mbox{-}\mathrm{code}(c') &=& \codesshort(c')\land (c'\not=\mathit{nil}) \land\forall_{\Domshort} x,y.\,\forall_{\Blshort} s.\, \neg \mathit{\nwlEmail}(x,y,s,c')
\\
\post_\mathtt{g} &=& \forall_{\Domshort} x,y.\, \forall_{\Blshort} s.\, \forall_{\codesshort} c.\, 
\mathit{\nwlEmail}(x,y,s,c) \leftrightarrow (\varphi_{\mathtt{g},1}\lor \varphi_{\mathtt{g},2})
\\
\varphi_\mathtt{\subscribe,1} &=& \mathit{\nwlEmailGhost}(x,y,s,c)
\\
\varphi_\mathtt{\subscribe,2} &=& (n = x) \land (u = y)\land (s = \mathit{false}) \land 
(c=\mathit{new\_code})
\\
&&
\land 
\neg \exists_{\Blshort} s'.\, \exists_{\codesshort} c'.\, \mathit{\nwlEmailGhost}(n,u,s',c')
\\
\varphi_{\mathtt{\unsubscribe},1} &=& (n \not= x) \land (u \not= y)\land \mathit{\nwlEmailGhost}(x,y,s,c)
\\
\varphi_{\mathtt{\unsubscribe},2} &=& (n = x) \land (u = y)\land 
(c=\mathit{new\_code})
\land 
 \exists_{\codesshort} c'.\, \mathit{\nwlEmailGhost}(n,u,s,c')
\\
\\
\pre_\confirm &=& \mathit{\nwlEmail} = \mathit{\nwlEmailGhost} \land \Blshort(\subscribeghost)
\\
\post_\confirm &=& \bigwedge_{\mathit{tt}\in \Blshort}
\subscribeghost = \mathit{tt} \to \left(\forall_\Domshort x,y.\,\forall_\Blshort s.\,\forall_\codesshort c.\,
\mathit{\nwlEmail}(x,y,s,c) \leftrightarrow \psi_{\mathit{tt}}\right)
\\
\psi_{\mathit{false}}& = & \cd \not= c \land \mathit{\nwlEmailGhost}(x,y,s,c)
\\
\psi_{\mathit{true}} &=& \cd \not= c \land \mathit{\nwlEmailGhost}(x,y,s,c) 
\lor ( c=\mathit{nil} \land s = \mathit{true} \land \mathit{\nwlEmailGhost}(x,y,\mathit{false},\cd))
\end{array}
\]

 \caption{\label{fig:conditions} Running Example: Pre- and post-conditions. $\mathtt{g}$ is either $\subscribe$ or $\unsubscribe$. }
\end{figure}

\section{Verification of \progLang\ Programs}\label{se:languages}

Here we introduce our programming language and our verification methodology. 
We introduce the SQL fragment  \sqlFragment\ in Section~\ref{se:sqlFragment}
and the scripting language \progLang\ in Section~\ref{se:progLang}. 
In Section~\ref{se:verification} we explain the weakest precondition transformer of \progLang, and 
we show how discharging verification conditions of FO\twoBD\ specification reduces to reasoning in FO\two. 

\subsection{The SQL fragment \sqlFragment}\label{se:sqlFragment}

\subsubsection{Data model of \sqlFragment}\label{se:datamodel-sql}

The data model of \sqlFragment\ is based on the presentation of the relational model in Chapter 3.1  of~\cite{abiteboul1995foundations}. 
We assume finite sets of $\dombind{1},\ldots,\dombind{s}$ called the 
\emph{bounded domains} and an infinite set \domunb{} called the \emph{unbounded domain}. 
The domains are disjoint. 
We assume three disjoint countably infinite sets: the set of attributes \att, the set of relation names \relnames,
and the set of variables \sqlvars. We assume a function $\srt:\att\to \{\domunbMM,\dombind{1},\ldots,\dombind{s}\}$. 
A \e{table} or a \e{relation schema} is a relation name and a finite sequence of attributes.
The attributes are the names of the columns of the table. The \e{arity} $\ar(R)$ of a relation schema $R$ is the number of its attributes. 
A \e{database schema} is a non-empty finite set of tables.

A \e{database instance} $\cI$
of a database schema $\bfR$ is a many-sorted structure with finite domains $\dom_0 \subseteq \domunbMM$
and $\dom_j = \dombind{j}$ for $1\leq j\leq s$. We denote by $\srt_\cI$ the function obtained from $\srt$
by setting $\srt_\cI(\mathit{att}) = \dom_0$ whenever $\srt(\mathit{att}) = \domunbMM$.  
The relation schema $R =(\mathit{relname},\mathit{att}_1,\ldots,\mathit{att}_e)$ is interpreted in $\cI$ as a relation 
$R^\cI \subseteq
\srt_\cI(\mathit{att}_1)\times \cdots\times \srt_\cI(\mathit{att}_e)$.
A \e{row} is a tuple in a relation $R^\cI$.

A database schema $\bfR$ is \textbf{valid} for \sqlFragment\ if for all relation schemas $R$ with attributes $\mathit{att}_1,\ldots,\mathit{att}_e$ 
in $\bfR$, there are at most two attributes $\mathit{att}_j$ for which $\srt(\mathit{att}_j) = \domunbMM$. 
In the sequel we assume that all database schemas are valid.  
The \sqlFragment\ commands will be allowed to use variables from \sqlvars. 
We denote members of $\sqlvars$ by $p$, $p_1$, etc. 

\subsubsection{Queries in \sqlFragment}

Given a relation schema $R$ and attributes $att_{1},\ldots,att_{n}$ of $R$,
the syntax of SELECT is:
\[\tt
\begin{array}{lll}
\tt \left\langle Select\right\rangle  &::= &  \tt \SELECT  att_{a_1},\ldots,att_{a_i}  \FROM  R\ \WHEREnospace \left\langle Condition\right\rangle\\
\tt \left\langle Condition\right\rangle &::= & 
\tt 
 att_{b_1},\ldots,att_{b_j}  \IN \left\langle Select\right\rangle | 
 \left\langle Condition\right\rangle  \AND \left\langle Condition\right\rangle | \\
{} & {}  & 
\tt\left\langle Condition\right\rangle  \OR \left\langle Condition\right\rangle |\, 
 \NOT \left\langle Condition\right\rangle |\, att_m=p
 \end{array}
\]
where $p$ is a variable and $1\leq m, a_1,\ldots,a_i,b_1,\ldots,b_j \leq n$.
The semantics of $\left\langle Select\right\rangle$
is the set of tuples from the projection of $R$ on $att_{a_1},\ldots,att_{a_i}$
which satisfy $\left\langle Condition\right\rangle$. The condition $\tt att_m=p$ indicates that
the set of rows of $R$ in which the attribute $\mathit{att}_m$
has value $p$ is selected. The condition $\tt {att}_{b_1},\ldots,att_{b_i}  \IN \left\langle Select\right\rangle$ selects the set of rows of $R$ in which 
$\mathit{att}_{b_1},\ldots,att_{b_i}$ are mapped
to one of the tuples  queried in the nested query $\left\langle Select\right\rangle$.

\subsubsection{Data-manipulating commands in \sqlFragment}

\sqlFragment\ supports the three primitive commands INSERT, UPDATE,
and DELETE. 

Let $R$ be a relation schema with attributes $att_{1},\ldots,att_{n}$.
Let $p,p_{1},\ldots,p_n$ be variables from \sqlvars. The syntax
of the primitive commands is:
\begin{eqnarray*}
\tt \left\langle Insert\right\rangle & ::= & \tt \INSERT (p_{1},\ldots,p_n)  \INTO  R\\
\tt \left\langle Update\right\rangle & ::= & \tt \UPDATE  R  \SET  att_m=p\
\WHEREnospace \left\langle Condition\right\rangle\\
\tt \left\langle Delete\right\rangle & ::= & \tt \DELETE  \FROM  R\
  \WHEREnospace \left\langle Condition\right\rangle
\end{eqnarray*}
The semantics of INSERT, UPDATE and DELETE is given in the natural
way. We allow update commands which set several attributes simultaneously. 
We assume that the data manipulating commands are used in a domain-correctness fashion,
i.e.\ \INSERT\ and \UPDATE\ may only assign values from $\srt(\mathit{att}_k)$ to 
any attribute $\mathit{att}_k$.

\subsection{The script language \progLang}\label{se:progLang}
\subsubsection{Data model of \progLang}\label{se:datamodel-proglang}
The data model of \progLang\ extends that of \sqlFragment\ with constant names and additional relation schemas. 
We assume a countably infinite set of constant names \connames, which is disjoint from $\att, \domunbMM,\dombind{1},\ldots,\dombind{s},\relnames$ but contains
\sqlvars.

A \e{state schema} is a database schema $\bfR$ expanded with a tuple of constant names 
$\overline{const}$. 
A \e{state} interprets a state schema. It consists of a database instance $\cI$ expanded with a tuple
of universe elements
$\overline{const}^\cI$ interpreting 
$\overline{const}$. 
In programs, the constant names play the role of local variables, domain constants (e.g.\ $\mathit{true}$ and $\mathit{true}$)
and of inputs to the program\footnote{We deviate from~\cite{abiteboul1995foundations} in the treatment of constants in that we do not assume that constant names are always interpreted as 
\emph{distinct}  members of \domunb. This is so since several program variables or inputs can have the same value.}.

\subsubsection{\progLang\ programs}

The syntax of \progLang\ is given by
\[
 \begin{array}{lll}
  \tt \left\langle Program \right\rangle &::= &\tt \left\langle Command \right\rangle  |  \left\langle Program \right\rangle; \left\langle Command \right\rangle \\
  \tt \left\langle Command \right\rangle &::= & \tt  \left\langle  Insert \right\rangle | \left\langle  Update \right\rangle |  \left\langle Delete \right\rangle |\, R = \left\langle  Select \right\rangle | 
   \, \bar{d} = \CHOOSE\ R\ |\\
  &&\tt if\ (cond)\,\left\langle Program \right\rangle\, else\, \left\langle Program \right\rangle  |\ if\ (cond)\,  exit
 \end{array}
\]

Every data-manipulating command $C$ of \sqlFragment\ is a \progLang\ command. 
The semantics of $C$ in \progLang\ is the same as in \sqlFragment, with the caveat that 
the variables receive their values from their interpretations (as constant names) in the state, and 
$C$ is only legal if all the variables of $C$ indeed appear in the state schema as constant names. 

The command $\tt R =  \left\langle Select \right\rangle$ 
assigns  the result of a \sqlFragment\ query to a relation schema $R \in \bfR$ whose arity and attribute sorts match the select query. 
Executing the command in a state $(\cI,\overline{const}^\cI)$ 
sets $R^\cI$ to the relation selected by $S$, leaving the interpretation of all other names unchanged. 
The variables in the query receive their values from their interpretations in the state, and 
for the command to be legal, all variables in the query must appear in the state schema as constant names. 

Given a relation schema $R\in \bfR$ with attributes $att_1,\ldots,att_n$ and a tuple $\bar{d} = (d_1,\ldots,d_n)$ of constant names from $\overline{const}$, 
 $\tt \bar{d} = \CHOOSE R$ is a \progLang\ command. If $R^\cI$ is empty,
 the command has no effect. If $R^\cI$ is not empty, executing this command
 sets $(d_1^\cI,\ldots,d_n^\cI)$
to the value of a 
non-deterministically selected row from $R^\cI$.

The branching commands have the natural semantics. 
Two types of branching conditions $\tt cond$ are allowed:
$(R = \vempty)$ and $(R ~!{=}~ \vempty)$, which check whether $R^\cI$ is the empty set, and 
 $(c_1 = c_2)$ and $(c_1 != c_2)$, which check whether $c_1^\cI = c_2^\cI$.

 See Fig.~\ref{fig:code} for examples of \progLang\ programs.

\subsection{Verification of \progLang\ programs}\label{se:verification}

\subsubsection{SQL and FO}
It is well-established that a core part of SQL is captured by FO by Codd's classical theorem relating the expressive power of relational algebra to relational calculus. 
While  SQL goes beyond  FO in several aspects, such as aggregation, grouping, and arithmetic operations (see~\cite{libkin2003expressive}), these aspects are not allowed in \sqlFragment.
Hence, FO is especially suited for reasoning about \sqlFragment\ and \progLang. 

The notions of state schema and state fit naturally in the syntax and semantics of FO. 
In the sequel, a \e{vocabulary} is a tuple of relation names and constant names. 
For a FO-formula $\psi$, we write $\voc(\psi)$ for the vocabulary consisting of the relation names and constant names in $\psi$. 
Every state schema $\bfR$ is a vocabulary. 
A state $(\cI,\overline{const}^\cI)$ interpreting a state schema $\bfR$ and a tuple of constant names 
$\overline{const}$
is an $\left\langle \bfR, \overline{const} \right\rangle$-structure.

\subsubsection{Hoare verification of \progLang\ programs and weakest precondition}\label{se:hoare}\label{se:vc}
\begin{figure}
\footnotesize
\[
\begin{array}{@{}l@{}}
\begin{array}{lcl}
  \semp{\tt att_i=c}^R & \eqdef & v_i=c \\
  \semp{\tt att_{b_1},\ldots,att_{b_j}\ \mathtt{IN\ }S_1}^R & \eqdef & \semp{S_1}[v_{b_k}/v_k: 1\leq k\leq j] \\
  \semp{\tt cond_1\ \mathtt{AND\ }cond_2}^R & \eqdef & \semp{\cond_1}^R \land \semp{\cond_2}^R \\
  \semp{\tt cond_1\ \mathtt{OR\ }cond_2}^R & \eqdef & \semp{\cond_1}^R \lor \semp{\cond_2}^R \\
  \semp{\tt \mathtt{NOT\ }cond_1}^R & \eqdef & \lnot\semp{\cond_1}^R
\end{array} 
\\
\\
\begin{array}{l}
  \semp{\tt \SELECT att_{a_1},\ldots,att_{a_i} \FROM R \WHERE cond} \eqdef 
  (\exists v_{a_{i+1}},\ldots,v_{a_{n}}R(\bar{v}) \land \semp{\cond}^R)[v_\ell/v_{a_\ell}: 1\leq \ell\leq i]\\
  \mbox{where }\{a_1,\ldots,a_n\} = \{1,\ldots,n\}
  \\
  \\
\end{array}\\
\begin{array}{lll}
  \wlp{\tt \INSERT (c_1,\ldots,c_n) \INTO R}Q &\eqdef &
    Q\big[R(\bar{\alpha})\lor \bigwedge_{i=1}^{n} \alpha_i=c_i \ \big/\  R(\bar{\alpha})\big] \\
    \wlp{\tt \DELETE\FROM R \WHERE cond}Q &\eqdef &
    Q\big[R(\bar{\alpha})\land \lnot\semp{cond}^R[\alpha_i/v_i: 1\leq i\leq n] \ \big/\  R(\bar{\alpha})\big] \\
\wlp{\tt \UPDATE R \SET att_j=c \WHERE cond}Q &\eqdef& 
    Q\big[R(\bar{\alpha})\land \lnot\semp{cond}^R[\alpha_i/v_i: 1\leq i\leq n] \lor{} \\
    &&~~~~\exists v_j R(\overline{\alpha^j})\land \semp{cond}^R[\alpha^j_i/v_i: 1\leq i\leq n] \land \alpha_j=c \\
  &&\multicolumn{1}{r}{
     \qquad\qquad\big/\  R(\bar{\alpha})\big]} \\
\end{array}
\end{array}
\]
\caption{\label{weakestpre-rules-sisql}	Rules for weakest precondition for {\sqlFragment} basic commands.
We denote by $R$ a relation schema with attributes $\langle att_1,\ldots,att_n\rangle$.  
We write $\alpha^j_i $ for $\alpha_i$ if $i \neq j$, and for $v_i$ if $i = j$. We denote $\bar{v}=(v_1,\ldots,v_n)$,
$\bar{\alpha}=(\alpha_1,\ldots,\alpha_n)$,
and $\overline{\alpha^j} = (\alpha_1^j,\ldots,\alpha_n^j)$.
Note that each of the last three rows $ Q\big[\mathit{expr}(\alpha) \big/ R(\bar{\alpha})\big] $
substitutes every occurrence of $R$ with an updated expression $\mathit{expr}$.
}
\end{figure}
Hoare logic is a standard program verification methodology~\cite{hoare1969axiomatic}. 
Let $P$ be a \progLang\ program and let $\varphi_{\mathit{pre}}$ and $\varphi_{\mathit{post}}$ be FO-sentences. 
A \e{Hoare triple} is of the form
$\{\varphi_{\mathit{pre}}\} P \{\varphi_{\mathit{post}}\}$.
A Hoare triple is a \emph{contract} relating the state before the program is run with the state afterward.
The goal of the verification process is to prove that the contract is correct. 

Our method of proving that a Hoare triple is valid reduces the problem to that of finite satisfiability of a FO-sentence.
We compute the \e{weakest precondition} $\wlp{P}\varphi_{\mathit{post}}$ of $\varphi_{\mathit{post}}$ with respect to the program $P$. 
The weakest precondition transformer was introduced in Dijkstra's classic paper~\cite{dijkstra1975guarded}, c.f.~\cite{jhala2009software}. 
Let $\mcA_P$ denote the state after executing $P$ on the initial state $\mcA$.
The main property of the weakest precondition is:
 $
  \mcA_P\models \varphi_{\mathit{post}}\mbox{ iff } \mcA \models \wlp{P}\varphi_{\mathit{post}}
 $.
Using $\wlp{\cdot}$ we can rephrase the problem of whether the Hoare triple $\{\varphi_{\mathit{pre}}\} P \{\varphi_{\mathit{post}}\}$ is valid
in terms of FO reasoning on finite structures:
\emph{Is the FO-sentence 
$ \varphi_{\mathit{pre}} \to \wlp{P}\varphi_{\mathit{post}}$
a tautology?}
Equivalently, 
\emph{is the FO-sentence
$
 \varphi_{\mathit{pre}} \land \neg \wlp{P}\varphi_{\mathit{post}}
$
unsatisfiable? }
Section~\ref{se:ver-fotwo}
discusses the resulting FO reasoning task. 

We describe the computation of the weakest precondition inductively
for \sqlFragment\ and \progLang.  The weakest precondition for {\sqlFragment} is given in Fig.~\ref{weakestpre-rules-sisql},
 and for  {\progLang} in Fig.~\ref{weakestpre-rules-sisl}.
For \sqlFragment\ conditions, $\semp{\cdot}^R$ is a formula with $n$ free first-order variables $v_{1},\ldots,v_n$ for a conditional
expression in the context of relation schema $R$ of arity $n$.
$\semp{\tt \SELECT\cdots\, \FROM\ R\, \cdots\,}$ is also a formula with free variables $v_{1},\ldots, v_n$ describing
the rows selected by the SELECT query.
The rules $\wlp{s}Q$ transform a (closed) formula $Q$, which is a postcondition of the command
$s$, into a (closed) formula expressing the weakest precondition.
The notation $\psi[t/v]$ indicates substitution of all free occurrences of the variable
$v$ in $\psi$ by the term $t$. 

The notation
$\psi\big[\theta(\alpha_1,\ldots,\alpha_n)/R(\alpha_1,\ldots,\alpha_n)]$ indicates 
that
any atomic sub-formula of $\psi$ of the form $R(\alpha_1,\ldots,\alpha_n)$ (for any $\alpha_1,\ldots,\alpha_n$) 
is replaced by $\theta(\alpha_1,\ldots,\alpha_n)$ (with the same $\alpha_1,\ldots,\alpha_n$). The formula
$\theta(v_1,\ldots,v_n)$ has $n$ free variables, and $\theta(\alpha_1,\ldots,\alpha_n)$
is obtained by substituting each $v_i$ into $\alpha_i$. The $\alpha_i$ may be variables or constant names. 

The weakest precondition of a \progLang\ program is obtained by applying the weakest precondition of its commands. 

\begin{figure}
\footnotesize
\[
\begin{array}{lcl}
  \semp{\tt c_1=c_2} & \eqdef & c_1=c_2 \\
  \semp{\tt c_1~!{=}~c_2} & \eqdef & c_1 \neq c_2 \\
  \semp{\tt R~!=\vempty} & \eqdef & \exists v_1,\ldots,v_nR(v_1,\ldots,v_n) \\
  \semp{\tt R{=}~\vempty} & \eqdef & \lnot\exists v_1,\ldots,v_nR(v_1,\ldots,v_n)
\end{array} 
\]
\[
\renewcommand\arraystretch{1.2}
\begin{array}{lll}
  \wlp{\tt R = \SELECT\ \cdots\ }Q &\eqdef& Q\big[\semp{\SELECT\ \cdots\ }(\alpha_1,\ldots,\alpha_n)\big/R(\alpha_1,\ldots,\alpha_n)\big] \\
  \wlp{\tt (d_1,\ldots,d_n) = \CHOOSE R}Q &\eqdef& \forall u_1,\ldots,u_n \big(R(u_1,\ldots,u_n) \to Q[u_i/d_i: 1\leq i\leq n]\big) \\
  \wlp{\tt if\ cond~ s_1~ else\ s_2}Q &\eqdef& (\lnot\semp{\cond} \land \wlp{s_2}Q) \lor (\semp{\cond} \land \wlp{s_1}Q)
\end{array}
\]
\caption{\label{weakestpre-rules-sisl}
	Rules for weakest precondition construction for {\progLang} basic commands. 
        The weakest precondition of 
	$\tt if\ cond~ exit; s_2$
	is the same as that of $\tt if\ !cond~s_2$. }
\end{figure}

\subsubsection{The specification logic \texorpdfstring{FO\twoBD}{FO2BD}\ and decidability of verification}\label{se:ver-fotwo}

As discussed in Section~\ref{se:hoare}, using the weakest precondition, the problem of verifying 
Hoare triples can be reduced to the problem of checking satisfiability
of a FO-sentence by a finite structure. While this problem
is not decidable in general by Trakhtenbrot's theorem, it is decidable for a fragment of FO we denote FO\twoBD,
 which extends the classical {\em two-variable fragment} FO\two.
The logic FO\two\ is the set of all FO formulas
which use only variables the variables $x$ and $y$. 
The vocabularies of FO\two-sentences are not allowed function names, only relation and constanst names. 
Note FO\two\ cannot express that a relation name is interpreted as a function.  
FO\two{} contains the equality symbol $=$. 
FO\twoBD\ extends FO\two\ by allowing quantification on an unbounded number of variables, 
under the restriction that all variables besides from $x$ and $y$ range over the bounded domains only. 

FO\twoBD\ is the language of our invariants and pre- and postconditions, see Eq.~(\ref{eq:Inv}) and Fig.~\ref{fig:conditions} in Section~\ref{se:running}. 
An important property of FO\twoBD\ is that it is essentially closed under taking weakest precondition 
according to Figs.~\ref{weakestpre-rules-sisql} and~\ref{weakestpre-rules-sisl}
since all relation schemas in a (valid) database schema have at most $2$ attributes whose sort is \domunb. 
We reduce the task of reasoning over FO\twoBD\ to reasoning over FO\two.

\begin{theorem}
Let $\{\varphi_{\mathit{pre}}\} P \{\varphi_{\mathit{post}}\}$ be a Hoare triple such that both $\varphi_{\mathit{pre}}$ and $\varphi_{\mathit{post}}$ belong to FO\twoBD. 
The problem of deciding whether $\{\varphi_{\mathit{pre}}\} P \{\varphi_{\mathit{post}}\}$  is valid is decidable. 
\end{theorem}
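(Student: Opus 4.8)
The plan is to convert validity of the triple into a single finite-satisfiability question over FO\twoBD\ and then invoke the decidability of that fragment. By the characterisation recalled in Section~\ref{se:hoare}, the Hoare triple $\{\varphi_{\mathit{pre}}\}P\{\varphi_{\mathit{post}}\}$ is valid iff the FO-sentence $\varphi_{\mathit{pre}}\land\neg\wlp{P}\varphi_{\mathit{post}}$ has no finite model. First I would observe that $\wlp{P}\varphi_{\mathit{post}}$ is computable: $P$ is a finite composition of commands, and the rules of Figures~\ref{weakestpre-rules-sisql} and~\ref{weakestpre-rules-sisl} rewrite a postcondition into a precondition in finitely many purely syntactic steps, so the induction on the structure of $P$ terminates and produces $\wlp{P}\varphi_{\mathit{post}}$ explicitly.

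The heart of the argument is the closure lemma: whenever $Q\in$~FO\twoBD, each single-command step keeps $\wlp{s}Q$ in FO\twoBD, so by induction $\wlp{P}\varphi_{\mathit{post}}\in$~FO\twoBD. I would prove this by inspecting the rules. The Boolean and branching rules (\texttt{AND}, \texttt{OR}, \texttt{NOT}, and the \texttt{if}-\texttt{else} rule) preserve the fragment because FO\twoBD\ is closed under Boolean connectives. The substantive cases are the rules that substitute a relation atom $R(\bar\alpha)$ by a formula $\theta(\bar\alpha)$ — namely the \INSERT rule, the \DELETE\ rule, the \UPDATE rule and the query-assignment rule — together with the rules that introduce fresh quantifiers, the existential quantifiers of the \SELECT semantics and of \UPDATE and the universal quantifier block of the \CHOOSE command. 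Here the validity condition on database schemas is decisive: every relation schema has at most two attributes of sort $\domunbMM$. Consequently, in any atom $R(\alpha_1,\ldots,\alpha_n)$ at most two argument positions carry unbounded-sort terms while all others carry bounded-sort terms, and symmetrically every quantifier block introduced for $R$ binds at most two unbounded-ranging variables, the rest ranging over bounded domains, which FO\twoBD\ permits in unlimited number.

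The step I expect to be the main obstacle is the variable bookkeeping inside this lemma. FO\two\ offers only the two names $x$ and $y$ for unbounded-ranging variables, so after a substitution $Q[\theta(\bar\alpha)/R(\bar\alpha)]$ — possibly iterated, since $Q$ may contain several occurrences of $R$ under nested quantifiers — I must exhibit a renaming that keeps the number of simultaneously live unbounded variables at two. The key observation making this possible is that, at each replaced atom, at most two of the $\alpha_i$ are the unbounded variables $x,y$ while the remaining arguments are bounded-sort terms or constants; hence $\theta$ must speak about at most those two unbounded elements together with arbitrarily many bounded ones, and the standard reuse of the two variables of FO\two\ (bound-variable renaming to avoid capture) suffices. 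Checking that this reuse composes correctly through all the substitutions produced by $\wlp{P}$ is precisely the content of the claim that FO\twoBD\ is \emph{essentially} closed under weakest preconditions.

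Finally, since FO\twoBD\ is closed under conjunction and negation, the sentence $\varphi_{\mathit{pre}}\land\neg\wlp{P}\varphi_{\mathit{post}}$ lies in FO\twoBD. By the decidability of finite satisfiability of FO\twoBD\ — which, as explained in Section~\ref{se:ver-fotwo}, reduces to finite satisfiability of FO\two\ and is decidable by the results of Mortimer and of Gr\"adel, Kolaitis and Vardi — I can decide whether this sentence has a finite model, and therefore decide validity of the Hoare triple.
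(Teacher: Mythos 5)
Your proposal is correct and follows essentially the same route as the paper's proof: reduce validity of the triple to finite satisfiability of $\varphi_{\mathit{pre}}\land\neg\wlp{P}\varphi_{\mathit{post}}$, use the at-most-two-unbounded-attributes condition on valid schemas to argue that the weakest-precondition rules keep the sentence in FO\twoBD\ (with only two unbounded-ranging variables, renamed to $x$ and $y$), and then appeal to the decidability of finite satisfiability of FO\two. The one step you only cite rather than carry out is the FO\twoBD-to-FO\two\ reduction itself, which the paper's proof makes explicit by hard-coding the bounded domains — replacing each table having a bounded attribute of size $d$ by $d$ tables without that attribute and turning bounded quantification into finite conjunction or disjunction.
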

\begin{proof}[(sketch)]
By Section~\ref{se:hoare}, $\{\varphi_{\mathit{pre}}\} P \{\varphi_{\mathit{post}}\}$  is valid iff 
$
 \theta = \neg(\varphi_{\mathit{pre}} \land \neg \wlp{P}\varphi_{\mathit{post}})
$
is satisfiable by a finite structure. 
We take the simplifying assumption that in all the tables, the sort of the first and second attributes $\mathit{att}_1$ and $\mathit{att}_2$
is $\dom^{\mathrm{U}}$.
This assumption does not effect the expressive power of \progLang.
Examination of the weakest precondition rules in 
Figs.~\ref{weakestpre-rules-sisql} and~\ref{weakestpre-rules-sisl}
reveals that
the only variables ranging over the unbounded domain are $v_1$ and $v_2$. 
Let $\theta'$ be the FO\twoBD\ sentence obtained from $\theta$ by substituting $v_1$ and $v_2$ with $x$ and $y$ respectively, and restricting the range of the quantifiers 
appropriately:
for a command manipulating or querying a table $R$ with attributes $\mathit{att}_1,\ldots,\mathit{att}_n$ in Figs.~\ref{weakestpre-rules-sisql} and~\ref{weakestpre-rules-sisl},  
each quantifier $\forall v_k$ or $\exists v_k$ 
is replaced with $\forall_{\srt(\mathit{att}_k)} v_k$ or $\exists_{\srt(\mathit{att}_k)} v_k$.
We compute an FO\two\ sentence $\theta''$ which is equivalent to $\theta'$ by hard-coding the bounded domains.
Every table $T$ which contains an attribute $\mathit{att}$ with $\srt(\mathit{att}) = \dombind{j}$ of size $d$ is replaced with $d$ tables $T_1,\ldots,T_d$
which do not have the attribute $\mathit{att}$. This change is reflected in $\theta''$, e.g.\ existential quantification is replaced with disjunction. 
By the decidability of finite satisfiability of FO\two-sentences, we get that the problem of deciding whether 
$\{\varphi_{\mathit{pre}}\} P \{\varphi_{\mathit{post}}\}$  is valid is decidable. 
\end{proof}

\section{\texorpdfstring{FO\two{}}{FO2} Reasoning}\label{se:FO2}\label{se:foreasonig}

\subsection{The bounded model property of \texorpdfstring{FO\two}{FO2}}\label{se:bounded}
Section~\ref{se:FO2} is devoted to our algorithm for FO\two\ finite satisfiability.
The main ingredient for this algorithm is the \emph{bounded model property}, which guarantees that if an FO\two$(\tau)$ sentence $\phi$
over vocabulary $\tau$
is satisfiable by any $\tau$-structure -- finite or infinite -- it is satisfiable by a finite $\tau$-structure whose cardinality is bounded by a computable function 
of $\phi$. 
The bound guaranteed in the first decidability proof of the finite satisfiability problem by Mortimer \cite{ar:Mortimer} was doubly exponential in the size of the formula.
Later, Gr\"{a}del, Kolaitis and Vardi \cite{GKV} proved the \emph{exponential model property}, from which we get that the problem is NEXPTIME-complete. 
The naive NEXPTIME algorithm arising from the exponential model property amounts to computing the exponential bound $\bnd(\phi)$
from \cite{GKV}, 
non-deterministically guessing $t\leq \bnd(\phi)$ and a $\tau$-structure $\mathcal{A}$
with universe $\{1,\ldots,t\}$, checking whether $\mathcal{A}$ satisfies $\phi$, and answering accordingly. 
Since the truth-value of FO-sentences is invariant to $\tau$-isomorphisms, a $\tau$-structure of cardinality at most $\bnd(\phi)$ satisfies $\phi$
iff such a structure with universe $\{1,\ldots,t\}$, $t\leq \bnd(\phi)$, satisfies $\phi$. 
Appendix~\ref{app:se:efficient} discusses a more refined version of the bound from \cite{GKV}. 

\subsection{Finite satisfiability using a SAT solver} \label{se:sat}

Our algorithm for FO\two{} finite satisfiability reduces the problem of finding a satisfying model of cardinality bounded by $\bnd$
to the satisfiability of a propositional Boolean formula in Conjunctive Normal Form CNF, which is then solved using a SAT solver.
The bound in \cite{GKV} is  given for formulas in Scott Normal Form (SNF) only. 
We use a refinement of SNF we call \emph{Skolemized Scott Normal Form (SSNF)}. 
The CNF formula we generate encodes the semantics of the sentence $\psi$ on a structure
whose universe cardinality is bounded by $\bnd$. An early precursor for the use of a SAT solver for finite satsifiability is~\cite{Mccune94adavis-putnam}. 

\subsubsection{Skolemized Scott Normal Form} 
An FO\two-sentence is in \emph{Skolemized Scott Normal Form} if it is of the form
\begin{gather}
\label{eq:SSNF}
 \forall x \forall y\, \left(\alpha(x,y) \land \bigwedge_{i=1}^m F_i(x,y)\to\beta_i(x,y)\right) \land \bigwedge_{i=1}^m \forall x \exists y\, F_i(x,y)
\end{gather}
where $\alpha$ and $\beta_i$, $i=1,\ldots,m$, are quantifier-free formulas which do not contain any $F_j$, $j=1,\ldots,m$. Note that $F_i$ are relation names. 
\begin{proposition}\label{prop:skolemized-scott}
Let $\tau$ be a vocabulary and $\phi$ be a FO\two$(\tau)$-sentence. There are polynomial-time computable vocabulary $\sigma\supseteq \tau$ and FO\two$(\sigma)$-sentence
$\psi$ such that 
 (a) $\psi$ is in SSNF;
 (b) The set of cardinalities of the models of $\phi$ is equal to the corresponding set for $\psi$; and 
 (c) The size of $\psi$ is linear in the size of $\phi$.
\end{proposition}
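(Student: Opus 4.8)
The plan is to transform an arbitrary FO\two-sentence into SSNF in two stages: first into ordinary Scott Normal Form (SNF), and then to Skolemize the existential conjuncts by introducing fresh relation names $F_i$. I would argue that each stage is polynomial-time, preserves the set of model cardinalities, and produces a result of linear size, so that composing them yields all three claims (a)--(c).

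\textbf{Stage 1: Scott Normal Form.} First I would recall the classical normalization (due to Scott) that puts any FO\two-sentence $\phi$ into the shape
\[
 \forall x\forall y\,\alpha(x,y)\;\land\;\bigwedge_{i=1}^m \forall x\exists y\,\gamma_i(x,y),
\]
with $\alpha,\gamma_i$ quantifier-free. The standard technique is to introduce, for each quantified subformula, a fresh relation name abbreviating it, and to add defining axioms forcing the new name to be equivalent to the subformula it abbreviates. Because FO\two\ restricts us to the two variables $x,y$, the nesting of quantifiers is limited, and this renaming can be carried out by a single bottom-up pass over the subformula structure; each subformula contributes $O(1)$ new clauses and one new relation symbol. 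I would verify that the defining axioms are themselves expressible with only $x$ and $y$ (which is where the two-variable structure is used crucially) and that adding the fresh names does not change the spectrum of model cardinalities: every model of $\phi$ extends uniquely to a model of the normalized sentence by interpreting each new name as the subformula it names, and conversely every model of the normal form restricts to a model of $\phi$ on the original vocabulary, over the same universe.

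\textbf{Stage 2: Skolemization of the existential conjuncts.} Given the SNF sentence above, I would introduce for each conjunct $\forall x\exists y\,\gamma_i(x,y)$ a fresh binary relation name $F_i$ intended to carry a witness, together with the pair of axioms $\forall x\exists y\,F_i(x,y)$ and $\forall x\forall y\,\bigl(F_i(x,y)\to\gamma_i(x,y)\bigr)$. The second axiom is folded into the universal conjunct by rewriting $\gamma_i$ as $\beta_i$ and guarding it with $F_i(x,y)$, which is exactly the implication $F_i(x,y)\to\beta_i(x,y)$ appearing in~(\ref{eq:SSNF}); the universal part $\forall x\forall y\,\alpha(x,y)$ is carried over unchanged, and the requirement that $\alpha,\beta_i$ contain no $F_j$ is met by construction since the $F_i$ are fresh. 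For cardinality preservation I would check both directions over a fixed universe: any model of the SNF sentence can be expanded to one of the SSNF sentence by choosing, for each $a$, some witness $b$ with $\gamma_i(a,b)$ and setting $F_i(a,b)$ true (and false elsewhere); conversely any SSNF model yields an SNF model on the smaller vocabulary over the same universe, since $\forall x\exists y\,F_i(x,y)$ together with $F_i\to\gamma_i$ re-derives $\forall x\exists y\,\gamma_i$. Neither stage alters the universe, so the set of cardinalities is preserved exactly, giving~(b); the symbol and clause counts are each bounded by a constant times the number of subformulas, giving the polynomial-time bound and the linear size bounds~(a) and~(c).

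\textbf{The main obstacle} I anticipate is not the Skolemization step, which is routine, but the careful bookkeeping in Stage~1 needed to guarantee \emph{linear} size rather than merely polynomial: a naive renaming that re-abbreviates shared subformulas independently can blow up, so I would need to ensure each distinct subformula is named once and that the defining axioms remain two-variable after the abbreviations are substituted in. Establishing that this can be done with a linear number of fresh symbols and clauses—and that the equivalence of model spectra is exact over every fixed universe, not just up to elementary equivalence—is the part that requires the most care.
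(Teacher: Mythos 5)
Your two-stage argument (convert to Scott Normal Form, then Skolemize the existential conjuncts with fresh relations $F_i$ guarded in the universal part) is exactly the route the paper takes, which simply cites the discussion before Proposition 3.1 of~\cite{GKV} for Stage~1 and describes Stage~2 as "an additional normalization step." The only detail you do not make explicit is the one the paper relegates to a footnote: the \emph{linear} (rather than $n\log n$) size bound relies on all relation symbols having arity at most $2$, which is precisely the bookkeeping concern you correctly flag as the delicate point.
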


Proposition~\ref{prop:skolemized-scott} follows from the discussion before Proposition 3.1 in~\cite{GKV}, by applying
an additional normalization step converting SNF sentences to SSNF sentences.\footnote{The word Skolemized is used in reference to the standard \emph{Skolemization} process of eliminating existential quantifiers by introducing fresh function names called 
\emph{Skolem functions}. In our case,
since function names are not allowed in our fragment, we introduce the relation names $F_i$, to which we refer as \emph{Skolem relations}. 
Moreover, we cannot eliminate the existential quantifiers entirely, but only simplify the formulas in their scope to the atoms $F_i(x,y)$. 
}\footnote{The linear size of $\psi$
uses our relation symbols have arity at most $2$ to get rid of a $\log$ factor in~\cite{GKV}. }

\subsubsection{The CNF formula}\label{se:FO2:CNF}
Given the sentence $\psi$ in SSNF 
from Eq.~(\ref{eq:SSNF})
and a bound $\bnd(\psi)$, we build a CNF propositional Boolean formula $C_\psi$ 
which is satisfiable iff $\psi$ is satisfiable. The formula $C_\psi$ will serve as the input to the SAT solver. 
First we construct a related CNF formula $B_\psi$.
The crucial property of $B_\psi$ is that it is satisfiable
iff $\psi$ is satisfiable by a model \emph{of cardinality exactly $\bnd(\psi)$}. 

It is convenient to assume $\psi$ does not contain constants.
If $\psi$ did contain constants $c$, they could be replaced by unary relations $U_c$ of size $1$.
Being an unary relation of size $1$ is definable in FO\two. 
Any atom containing $c$ cannot use both $x$ and $y$, and hence the universe member interpreting $c$ can be quantified: 
e.g.\ $R(x,c)$ is replaced with $\exists y\,U_c(y)\land R(x,y)$. 
Let $\const(\psi)$ be the set of unary relations $U_c$ corresponding to constants. 

We start by introducing the variables and clauses which guarantee that $B_\psi$ encodes a structure with the universe $\{1,\ldots,\bnd(\psi)\}$. 
Later, we will add clauses to guarantee that this structure satisfies $\psi$. 
For every unary relation name $U$ in $\psi$ 
and $\ell_1\in \{1,\ldots,\bnd(\psi)\} $, let $v_{U,\ell_1}$ be a propositional variable. 
For every binary relation name $R$ in $\psi$
and $\ell_1,\ell_2 \in \{1,\ldots,\bnd(\psi)\}$, let $v_{R,\ell_1,\ell_2}$ be a propositional variable. 
The variables $v_{U,\ell_1}$ and $v_{R,\ell_1,\ell_2}$ encode the interpretations of the unary and binary relation names $U$ and $R$ in the straight-forward way
(defined precisely below). 
Let $V_\psi$ be the set of all variables $v_{U,\ell_1}$ and $v_{R,\ell_1,\ell_2}$.  

Given an assignment ${\As}$ to the variables of $V_\psi$ we define 
the unique structure $\mathcal{A}_{\As}$ as follows:
\begin{enumerate}
 \item The universe $A_{\As}$ of $\mathcal{A}_{\As}$ is $\{1,\ldots,\bnd(\psi)\}$;
 \item An unary relation name $U$ is interpreted as the set $\{\ell_1 \in A_{\As} \mid {\As}(v_{U,\ell_1})=True\}$;
 \item A binary relation name $R$ is interpreted as the set $\{(\ell_1,\ell_2) \in A_{\As}^2 \mid {\As}(v_{R,\ell_1,\ell_2})=True\}$;
\end{enumerate}
For every structure $\mathcal{A}$ with universe $\{1,\ldots,\bnd(\psi)\}$, 
 there is ${\As}$ such that $\mathcal{A} = \mathcal{A}_{\As}$. 

 Before defining $B_\psi$ precisely we can already state the crucial property of $B_\psi$:

\begin{proposition}\label{prop:assignment}~
 $\psi$ is satisfiable by a structure with universe $\{1,\ldots,$ $\bnd(\psi)\}$ iff $B_\psi$ is satisfiable. 
\end{proposition}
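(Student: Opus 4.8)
The plan is to define $B_\psi$ as the CNF encoding of the assertion ``$\mathcal{A}_{\As}\models\psi$'' and then read off the proposition from the bijection between assignments and structures that has just been set up. Concretely, I would exploit that $\psi$ is in SSNF (Eq.~(\ref{eq:SSNF})), so that over the fixed finite universe $\{1,\ldots,\bnd(\psi)\}$ every quantifier unfolds into a finite Boolean connective: a $\forall$ becomes a conjunction over all $\bnd(\psi)$ choices of the bound variable and an $\exists$ becomes a disjunction.

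For the universal-universal conjunct $\forall x\forall y\,(\alpha(x,y)\land\bigwedge_{i}F_i(x,y)\to\beta_i(x,y))$ I would instantiate $(x,y)$ at every pair $(\ell_1,\ell_2)\in\{1,\ldots,\bnd(\psi)\}^2$. In each instance the matrix is quantifier-free, so each atom is replaced by the corresponding propositional variable from $V_\psi$ --- $R(x,y)\mapsto v_{R,\ell_1,\ell_2}$, $U(x)\mapsto v_{U,\ell_1}$, and so on --- while each equality atom $x=y$ is replaced by the truth constant determined by whether $\ell_1=\ell_2$. The result is a ground Boolean formula over $V_\psi$, which I convert to CNF. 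For each existential conjunct $\forall x\exists y\,F_i(x,y)$ the unfolding gives $\bigwedge_{\ell_1}\bigvee_{\ell_2}v_{F_i,\ell_1,\ell_2}$, which is already a conjunction of clauses. Taking $B_\psi$ to be the conjunction of all these CNF pieces completes the construction.

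The correctness argument is then a routine structural induction proving that, for every $\As$, the value of $B_\psi$ under $\As$ equals the truth value of $\psi$ in $\mathcal{A}_{\As}$; the base cases are exactly the atom-to-variable and equality-to-constant translations above, and the inductive cases follow because the finite unfoldings of $\forall$ and $\exists$ match their semantics over the universe $A_{\As}=\{1,\ldots,\bnd(\psi)\}$. Given this, I would close the proposition as follows: if $\psi$ is satisfied by some structure $\mathcal{A}$ with universe $\{1,\ldots,\bnd(\psi)\}$, then by the stated surjectivity of $\As\mapsto\mathcal{A}_{\As}$ there is an $\As$ with $\mathcal{A}=\mathcal{A}_{\As}$, whence $\As\models B_\psi$; conversely any satisfying assignment $\As$ of $B_\psi$ yields the model $\mathcal{A}_{\As}$ of $\psi$ on the required universe.

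The only genuinely delicate points, rather than a true obstacle, are bookkeeping: ensuring the atom translation is faithful for equality (handled by the ground truth constants) and confirming that the earlier elimination of constants via the relations in $\const(\psi)$ lets me assume $\psi$ has no constant symbols, so that every atom indeed maps to a variable in $V_\psi$. I expect no conceptual difficulty, since the whole statement is essentially the assertion that finite-model evaluation of a bounded-universe FO\two\ sentence coincides with propositional evaluation of its grounding.
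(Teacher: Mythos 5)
Your proposal is correct and follows essentially the same route as the paper: ground the SSNF sentence over the fixed universe $\{1,\ldots,\bnd(\psi)\}$, turning $\forall$ into conjunction and $\exists$ into disjunction, map atoms to the variables of $V_\psi$, and conclude via the correspondence $\As\mapsto\mathcal{A}_{\As}$. The only cosmetic differences are that the paper forces equality via unit clauses on variables $v_{=,\ell_1,\ell_2}$ rather than substituting ground truth constants, and it fixes the CNF conversion of the grounded $\forall\forall$-matrix to be the Tseitin transformation (so satisfaction is only up to expanding $\As$ with the auxiliary variables), neither of which changes the argument.
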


The formula $B_\psi$
is the conjunction of $B^{\mathrm{eq}}$, $B^{\forall\exists}$, and $B^{\forall\forall}$,
described in the following.

\vspace{5pt}
\noindent
\textbf{The equality symbol.}
The equality symbol requires special attention. 
Let 
$$B^{\mathrm{eq}} = \bigwedge_{1\leq \ell_1\not=\ell_2\leq m} (\neg v_{=,\ell_1,\ell_2}) \land \bigwedge_{1\leq \ell\leq m} v_{=,\ell,\ell}$$
$B^{\mathrm{eq}}$ enforces that the equality symbol is interpreted correctly as the equality relation on universe elements.

\vspace{5pt}
\noindent
\textbf{The $\forall \exists$-conjuncts.}
For every conjunct $\forall x \exists y\, F_i(x,y)$ and $1 \leq \ell_1 \leq \bnd(\psi)$, 
let $B^{\forall\exists}_{i,\ell_1}$ be the clause
$\bigvee_{\ell_2=1}^{\bnd(\psi)} v_{F_i,\ell_1,\ell_2}$. This clause
says that
there is at least one universe element $\ell_2$ such that $\mathcal{A}_{\As}\models F(\ell_1,\ell_2)$. 
Let 
$$B^{\forall\exists} = \bigwedge_{1\leq i\leq m}\bigwedge_{1\leq \ell_1\leq \bnd(\psi)} B^{\forall\exists}_{i,\ell_1}$$
For every truth-value assignment ${\As}$ to $V_{\psi}$, $\mathcal{A}_{\As}$ satisfies~$\bigwedge_{i=1}^m \forall x \exists y\, F_i(x,y)$
iff ${\As}$ satisfies $B^{\forall\exists}$.

\vspace{5pt}
\noindent
\textbf{The $\forall\forall$-conjunct.}
Let $\forall x \forall y\, \alpha'$ be the unique $\forall\forall$-conjunct of $\psi$. 
For every $1\leq \ell_1,\ell_2\leq  \bnd(\psi)$, let $\alpha''_{\ell_1,\ell_2}$
denote the propositional formula obtained from the quantifier-free FO\two\ formula $\alpha'$ 
by substituting every atom $a$ with the corresponding propositional variable for $\ell_1$ and $\ell_2$
as follows:\\ 
$
\begin{array}{lllllllll}
U(x)&\mapsto& v_{U,\ell_1},& \ \ \ \  \ \ 
R(y,y)&\mapsto& v_{R,\ell_2,\ell_2},& \ \ \ \ \ \ 
R(x,x)&\mapsto& v_{R,\ell_1,\ell_1}\\ 
U(y)&\mapsto& v_{U,\ell_2},& \ \ \ \  \ \ 
R(x,y)&\mapsto& v_{R,\ell_1,\ell_2},& \ \ \ \  \ \ 
R(y,x)&\mapsto& v_{R,\ell_2,\ell_1} \ \  \ \  \ \
\end{array}
$\\
Let $B^{\forall\forall}_{\ell_1,\ell_2}$ be the Tseitin transformation of $\alpha''_{\ell_1,\ell_2}$
to CNF \cite{tseitin1983complexity}, see also \cite[Chapter 2]{biere2009handbook}. The Tseitin transformation
introduces a linear number of new variables of the form $u^{\gamma}_{\ell_1,\ell_2}$, one for each
sub-formula $\gamma$ of $\alpha''_{\ell_1,\ell_2}$. The transformation
guarantees that, 
for every assignment $\As$ of $V_\psi$, $\As$ satisfies $\alpha''_{\ell_1,\ell_2}$ iff $\As$ can be expanded to satisfy 
$B^{\forall\forall}_{\ell_1,\ell_2}$. 
Let $$B^{\forall\forall} = \bigwedge_{1\leq \ell_1,\ell_2\leq  \bnd(\psi)} B^{\forall\forall}_{\ell_1,\ell_2}(\ell_1,\ell_2)$$
Appendix~\ref{app:se:forallforall} gives the construction of the CNF formula $B^{\forall\forall}$ according to the Tseitin transformation explicitly.

The construction of $B_\psi$ is finished and Proposition~\ref{prop:assignment} holds. 
Note that \cite{GKV} guarantees only that  $\bnd(\psi)$ is an \emph{upper bound} on the cardinality of a satisfying model. 
Therefore, we build a formula $C_\psi$ based on $B_\psi$ such that  
$C_\psi$ is satisfiable iff $\psi$ is satisfiable by a structure of cardinality \emph{at most} $\bnd(\psi)$. 
We leave the technical details of the construction of $C_{\psi}$ to the appendix. 
The algorithm for finite satisfiability of a FO\two-sentence $\phi$ consists of computing the SSNF $\psi$ of $\phi$ and returning the result of 
a satisfiability check using a SAT solver on $C_{\psi}$. 
Both the number of variables and the number of clauses in $C_{\Uni(\psi)}$ are quadratic in $\bnd(\psi)$.

\section{Experimental Results}\label{se:experiments}
\subsection{Details of our tools}
The verification condition generator described in Section~\ref{se:vc} is implemented in Java, JFlex and CUP. It is employed to parse
the schema, precondition and postcondition and the \progLang\ programs. 
The tool checks that the pre and post conditions are specified
in FO\two{} and that the scheme is well defined. 
The SMT-LIB v2~\cite{smt2} standard
language is used as the output format of the verification condition generator.
We compare the behavior of our FO\two-solver with Z3 on the verification condition generator output. The validity
of the verification condition can be checked by providing its negation to the SAT solver. 
If the SAT solver exhibits a satisfying assignment then that serves as counterexample for the correctness of the program. 
If no satisfying assignment exists, then the generated verification condition is valid, and therefore the program satisfies the assertions.
The FO\two-solver described in Section~\ref{se:FO2} is implemented in python and uses pyparsing to parse the SMT-LIB v2~\cite{smt2} file.
The FO\two-solver assumes a FO\two-sentence as input and uses \emph{Lingeling}~\cite{Biere10lingeling} SAT solver as a base Solver.


\subsection{Example applications}
\begin{table}
  \centering
    \begin{tabular}[t]{lrr}
    \toprule
                 & FO\two-solver & Z3 \\
    \midrule
                web-subscribe      & 2.62s   & 0.08s \\
               web-unsubscribe    & 0.779s  & OM \\
               firewall           & 0.876s  & OM \\
               conf-bid           & 0.451s  & 0.015s \\
               conf-assign        & 0.369s  & 0.013s \\
               conf-display       & 0.992s  & 0.016s \\
    \bottomrule
    \multicolumn{3}{c}{incorrect}
    \end{tabular}%
    \quad
    \begin{tabular}[t]{lrr}
    \toprule
                 & FO\two-solver & Z3 \\
    \midrule
       web-subscribe     & 1.07s   & 0.1s \\
       web-unsubscribe   & 8.209s  & 0.1s \\
       firewall          & 2.82s   & 0.103s \\
       conf-bid          & TO      & 0.22s \\
       conf-assign       & 1.196s  & 0.2s  \\
       conf-display      & TO      & 0.16s \\
    \bottomrule
    \multicolumn{3}{c}{correct}
    \end{tabular}%
    \caption{Running time comparison for example benchmarks}
  \label{tab:smallExamples}%
\end{table}

We tried our approach with a few programs inspired by real-life applications.
The first case study is a simplified version of the newsletter functionality
included in the PANDA web administrator, that was already discussed and is
shown in Fig.~\ref{fig:code}.\footnote{
We omit the confirmation step due to a missing feature in the implementation of the weakest precondition, 
however the final version of the tool will support the code from Table~\ref{fig:code}.}
The second is an excerpt from a firewall that updates a table of which device
is allowed to send packets to which other device.
See Appendix~\ref{app:network} for the code and specifications of the firewall. 
The third is a conference management system with a database of papers, and
transactions to manage the review process: reviewers first \emph{bid} on
papers from the pool of submissions, with a policy that a users cannot bid
for papers with which they are conflicted. The chair then \emph{assigns} reviewers
to papers by selecting a subset of the bids. At any time, users can ask to
\emph{display} the list of papers, with some details, but the system may hide
some confidential information, in particular, users should not be able to see
the status of papers before the program is made public. 
We show how our system detects an information flow bug in which  
the user might learn that some papers were accepted prematurely by examining the session assignments. 
This bug is based on a bug we observed in a real system. 
See Appendix~\ref{app:conference} for the code and specifications of the conference management system. 
Each example comes with two specifications, one correct and the other incorrect.

The running time in seconds for all of our examples is reported in Table~\ref{tab:smallExamples}. Timeout is set to 60 minutes and denoted as TO. If the solver reaches \emph{out of memory} we mark it as OM. 
On the set of correct examples, both solvers answer within a few seconds, Z3 terminates within milliseconds,
while FO\two-solver takes a few seconds and times out on some of them.
On the set of incorrect examples, Z3 fails to answer while our solver performs well.
Note that correct examples correspond to unsatisfiable FO\two-sentences, while 
incorrect examples correspond to satisfiable FO\two-sentences.

\subsection{Examining scalability}\label{se:inflated}
\mypara{Inflated examples}
In order to evaluate scalability to large examples we inflated our base examples. 
For instance, while the \emph{subscribe} example from Table~\ref{tab:smallExamples} consisted of the subscription of one new email to a mailing-list, 
Table~\ref{tab:inflatedExamples} presents analogous examples in which multiple emails are subscribed to multiple mailing-lists.
The column \emph{multiplier} details the number of individual subscriptions in each example program.
The \emph{unsubscribe} and \emph{firewall} example programs are inflated similarly (see Appendix~\ref{app:network-inflated}).  

We have tested both our FO\two-solver and Z3 on large examples and the results reported in Table~\ref{tab:inflatedExamples}. 
The high-level of the results is similar to the case of the small examples. 
On the incorrect examples set Z3 continues to fail mostly due to running out 
of memory, though it succeeds on the subscribe example. 
On the correct examples set Z3 continues to outperform the FO\two-solver. 

\mypara{Artificial examples}
In addition, we constructed a set of artificial benchmarks comprising of several
families of FO\two-sentences. Each family is parameterized by a number that controls
the size of the sentences (roughly corresponding to the number of quantifiers in the sentence).
These problems are inspired by combinatorial problems such as graph coloring and paths.
We ran experiments using the FO\two-solver and three publicly available solvers:
Z3, CVC4 (which are SMT solvers), and Nitpick (a model checker).
The results are collected in Table~\ref{tab:artificial}.
The artificial benchmarks are available at~\url{http://forsyte.at/wp-content/uploads/artificial-smt2.tar.gz}. 

\mypara{Scalability of FO\two-solver}
We shall conclude that the FO\two-solver, despite being a proof of concept in python with minimal optimizations, handles well incorrect specifications 
(satisfiable sentences)
and also scales well on them. However it struggles on the correct specifications and does not scale well. This suggests that in future work we may choose to 
run both our solver and Z3 in parallel and answer according the first answer obtained. We also intend to explore how to improve the performance 
of our solver in the case of incorrect examples. 
By construction, whenever FO\two-solver finds a satisfying model, its size is at most $4$ times that of the minimal model. 
(The constant $4$ can be decreased or increased. ) 

\begin{table}\centering
\footnotesize
    \begin{tabular}{llrrrrrr}
    \toprule
          &            & \multicolumn{3}{c}{FO\two-solver} & \multicolumn{3}{c}{Z3} \\
          & multiplier & 1 & 10 & 100 & 1 & 10 & 100 \\
    \midrule
    incorrect
    & subscribe   & 2.62s & 0.973s & 4.04s        &   0.08s & 0.126s & 0.203s    \\
    & unsubscribe & 0.779s & 0.529s & 1.27s       &   OM & OM & OM   \\
    & firewall    & 0.876s & 0.723s & 2.251s      &   OM & OM & OM \\
    correct
    & subscribe   & 1.07s & 98.249s & TO          &   0.1s & 0.11s & 0.116s \\
    & unsubscribe & 8.209s & 456.308s & TO        &   0.1s & 0.157s & 0.201s \\
    & firewall    & 2.82s & 50.142s & 2951.882s   &   0.103s & 0.121s & 0.143s \\
    \bottomrule
    \end{tabular}%
    \caption{Running time comparison on inflated examples}
  \label{tab:inflatedExamples}%
\end{table}

\begin{table}\centering
\footnotesize
\newcommand\TO{\multicolumn{1}{c}{TO}}
\newcommand\hsep{\hline\rule{0pt}{1em}}
    \begin{tabular}{lc|l|rrrr}
    \toprule
          &   size    & status & \multicolumn{1}{c}{Z3} & \multicolumn{1}{c}{CVC4} & \multicolumn{1}{c}{Nitpick} & \multicolumn{1}{c}{FO\two-solver} \\
    \midrule
2col	&	3	&	unsat	&	0m0.037s	&	0m0.076s	&	\TO	&	\TO	\\	
	&	4	&	sat	&	\TO	&	\TO	&	0m7.038s	&	0m5.433s	\\	
	&	5	&	unsat	&	0m0.702s	&	0m0.477s	&	\TO	&	\TO	\\	
	&	6	&	sat	&	\TO	&	\TO	&	0m8.973s	&	0m9.323s	\\	
	&	10	&	sat	&	\TO	&	\TO	&	0m37.944s	&	0m19.580s	\\	
	&	11	&	unsat	&	1m32.664s	&	0m30.912s	&	\TO	&	\TO	\\	
	&	14	&	sat	&	\TO	&	\TO	&	2m13.661s	&	\TO	\\	
	&	40	&	sat	&	\TO	&	\TO	&	\TO	&	\TO	\\	\hsep
alternating-paths	&	2	&	sat	&	0m0.049s	&	\TO	&	0m11.144s	&	0m1.105s	\\	
	&	100	&	sat	&	\TO	&	\TO	&	\TO	&	0m9.671s	\\	\hsep
alternating-simple-paths	&	3	&	sat	&	\TO	&	\TO	&	\TO	&	0m6.754s	\\	
	&	4	&	sat	&	\TO	&	\TO	&	\TO	&	0m10.128s	\\	
	&	7	&	sat	&	\TO	&	\TO	&	\TO	&	\TO	\\	
	&	10	&	sat	&	\TO	&	\TO	&	\TO	&	\TO	\\	\hsep
exponential	&	3	&	sat	&	\TO	&	\TO	&	0m12.255s	&	0m1.847s	\\	
	&	4	&	sat	&	\TO	&	\TO	&	0m15.358s	&	11m6.482s	\\	\hsep
one-var-alternating-sat	&	300	&	sat	&	0m0.037s	&	0m0.497s	&	0m11.605s	&	0m9.720s	\\	\hsep
one-var-alternating-unsat	&	5	&	unsat	&	0m0.026s	&	0m0.073s	&	0m22.537s	&	0m54.198s	\\	\hsep
one-var-nested-exists-sat	&	300	&	sat	&	0m0.031s	&	0m0.045s	&	0m7.132s	&	0m0.562s	\\	\hsep
one-var-nested-forall-sat	&	500	&	sat	&	0m0.033s	&	\TO	&	0m7.183s	&	0m7.318s	\\	\hsep
path-unsat	&	2	&	unsat	&	0m0.033s	&	0m0.044s	&	\TO	&	1m37.099s	\\	
	&	3	&	unsat	&	0m0.030s	&	0m0.062s	&	\TO	&	1m35.451s	\\	
	&	6	&	unsat	&	0m0.037s	&	0m0.891s	&	\TO	&	1m39.209s	\\	
    \bottomrule
    \end{tabular}%
    \caption{Running time comparison on artificial benchmarks}
  \label{tab:artificial}%
\end{table}

\section{Discussion}\label{se:discussion}\label{se:future}
\mypara{Related work}
Verification of database-centric software systems has received increasing attention in recent years~\cite{DBLP:journals/sigmod/DeutschHV14}. 
Tools from program analysis and model-checking 
are used to reason about the correctness of programs which access a database. Unlike our approach,
the services accessing the database are usually provided \emph{a priori} in terms of a specification in the style of a local contract~\cite{Nigam,Kumaran}.  
The code of the services themselves may be automatically synthesized from the specification, cf.\ e.g.~\cite{DBLP:journals/vldb/FernandezFLS00,Florescu00weave:a,DBLP:journals/sigmod/DeutschHV14,Deutsch:2005:VID:1066157.1066219}.
The focus of verification then is on global temporal properties of the system assuming the local contracts. 
In contrast, out goal is to verify that the input code (written by a programmer rather than generated automatically)
is correct with respect to a local specification. We discuss this also in Section~\ref{se:future}.

Several papers use variations of FO\two\ to study verification of programs that manipulate relational information.
\cite{calvanese2014shape} presents a verification methodology based on FO\two, a description logic and a separation logic for analyzing the shapes and content of 
in-memory data structures. 
\cite{Rensink} develops a logic similar to FO\two\ to reason about shapes of data structures. In both~\cite{calvanese2014shape} and~\cite{Rensink},
the focus is on analysis of shapes in dynamically-allocated memory, and databases are not studied. Furthermore, no tools based on these works are available. 
A description logic related to FO\two\ was used in~\cite{calvanese2013evolving} to verify 
that graph databases preserve the satisfaction of constraints
as they evolve. The focus of this work is on the correctness of the database, rather than the programs manipulating it. 
The verification method suggested was not implemented. In fact, to our knowledge no description logic solver implements reasoning tasks for
the description logic counterpart of FO\two\ studied in~\cite{calvanese2013evolving}, not even solvers for expressive description logics such as SROIQ. 

Verification of script programs with embedded queries has revolved around security, see~\cite{felderer2015security}. 
However, it seems
no other work has been done on such programs. 

\mypara{Conclusion and future work}
We developed a verification methodology for script programs with access to a relational database via SQL. 
We isolated a simple but useful fragment \sqlFragment\ of SQL and developed a simple script programming language \progLang\
on top of it. We have shown that verifying the correctness of \progLang\ programs with respect to specifications
in FO\twoBD\ is decidable. 
We implemented a solver for the 
FO\two\ finite satisfiability problem, and, based on it, a verification tool for \progLang\ programs.
Our experimental results are very promising and suggest that our approach has great potential 
to evolve into a mainstream method for the verification of script programs with embedded SQL statements. 

While we believe that many of the SQL statements that appear in real-life programs fall into our fragment \sqlFragment\, 
it is evident that future tools need to consider all of database usage in real-world programs. 
In future work, we will 
explore the extension of \progLang\ and \sqlFragment. Our next goal is to be able to verify large, real-life script programs such as Moodle~\cite{moodle},
whose programming language and SQL statements use e.g.\ some arithmetic or simple inner joins. To do so, we will adapt our
approach from the custom-made syntax of \progLang\ to a fragment of PHP. 
We will both explore decidable logics extending FO\twoBD, and investigate verification techniques  based on undecidable logics
including the use of first-order theorem provers such as Vampire~\cite{riazanov2002design,kovacs2013first}
and abstraction techniques which guarantee soundness but may result in spurious errors~\cite{CEGAR}.
For dealing with queries with transitive closure, it is natural to consider fragments of Datalog~\cite{ceri1989you}. 

A natural extension is to consider global temporal specifications in addition to local contracts.
Here the goal is to verify properties of the system which can be expressed in a temporal logic such as Linear Temporal Logic LTL~\cite{pnueli1977temporal,clarke1986automatic}.
The approach surveyed in~\cite{DBLP:journals/sigmod/DeutschHV14}, which explore global temporal specifications of services given in terms of local contracts, 
may be a good basis for studying global temporal specifications in our context.

Another research direction which emerges from the experiments in Section~\ref{se:experiments} is to explore 
how to improve the performance of our FO\two\ solver on unsatisfiable inputs. 

%


%

\newpage

\newpage

\appendix
\section{PANDA web administrator}
\subsection{PANDA Source Code: confirm.php}\label{app:panda:confirm}
\lstset{
  language        = php,
  basicstyle      = \small\ttfamily,
   commentstyle    = \color{dark-gray},
  backgroundcolor = \color{white},
  showstringspaces=false,
  }
  
We present the code of {\tt newsletters/confirm.php} from PANDA Web Administrator version 1.0rc2 in Fig.~\ref{app:fig:php}. 
The code was translated manually to the confirm function in Fig.~\ref{fig:code}. 
The biggest difference between the PHP code and the \progLang\ code
is that the PHP code uses
{\tt dbh->getRow} to perform an SQL query which returns one row, whereas in \progLang\ this is divided into two steps:
first a SELECT query is executed and then CHOOSE selects one row. 
Additionally, the PHP code performs  some more sanity checks, 
\begin{figure}[h]
\begin{lstlisting}
<?php
// No code supplied? Why are you calling us?
if(!$_GET['code']){
	print "Code not specified."; die;
}

// Include common stuff
require_once('common.php');

// Fetch code from command line
$vcode = $_GET['code'];

// Check if code exists in DB
$row = $dbh->getRow('SELECT * FROM newsletter_addresses 
                     WHERE confirm_code = ?', array(md5($vcode)), 
                     DB_FETCHMODE_ASSOC);

// If it doesn't exist, die
if(!$row['confirm_code']){
	print "No such code"; die;
}

// If user is not subscribed, 
// then the confirmation is to subscribe him, so do it
if($row['subscribed'] == 'f'){
	if(!$dbh->query('UPDATE newsletter_addresses 
	                 SET subscribed = TRUE, confirm_code = NULL 
	                 WHERE confirm_code = ?',array(md5($vcode)))){
		print "Error while accessing database, 
		       contact system administrator."; die;
	};
} else { // Else, code is to unsubscribe him, so do it
	if(!$dbh->query('DELETE FROM newsletter_addresses 
	                 WHERE confirm_code = ?',array(md5($vcode)))){
		print "Error while accessing database, 
		       contact system administrator."; die;
	}
}
print "TRUE";
?>
\end{lstlisting}
\caption{\label{app:fig:php}The code of confirm.php, on which \confirm\ in Fig.~\ref{fig:code} is based. }
\end{figure}

\normalsize
\subsection{Correcting the Error in Panda Source Code}

In Section~\ref{se:running} we described a natural correction of the error in the running example. 
Under this correction, $\tt \confirm-corrected$ satisfies the pre- and postconditions $\pre_\confirm$ and $\post_\confirm$ from Fig.~\ref{fig:conditions}. 

\ \\
\small
{\tt
\confirm-corrected(\cd,$\tt \subscribeghost$): \\
\hspace*{0.5cm} A = \SELECT \subscribe \FROM \nwlEmail \WHERE \confirmcode\ = \cd;\\
\hspace*{0.5cm} if (A = empty) exit;  {\color{dark-gray}//"No such code"} \\
\hspace*{0.5cm} if ($\tt \subscribeghost$ = false) 
        \UPDATE \nwlEmail \SET \subscribed\ = true, \confirmcode\ = $\nil$\WHERE \confirmcode\ = \cd\\
\hspace*{0.5cm}  else\ \DELETE \FROM \nwlEmail \WHERE \confirmcode\ = \cd;\\
}

$\subscribeghost$ is no longer a ghost variable. Now it is a second argument to $\tt \confirm-corrected$. The function \subscribe,
which had generated URLs calling $\confirm$ with one argument, namely the confirm code, now generates URLs 
with an additional argument $\mathit{true}$. Similarly,  \unsubscribe\ generates URLs with the additional argument $\mathit{false}$.

\section{The Conference Management Example}\label{app:conference}


%
%
%

In this example we verify parts of a system for conference management
which assigns reviewers to papers and records the reviews and acceptance/rejection decisions.
We focus on the earlier parts of the reviewing process:
First, potential reviewers (e.g.\ PC members) bid on papers to review. 
Based on the bids, reviewers are assigned to the papers (e.g.\ by the PC chair). 
An additional functionality of the system that we focus on is displaying the list of 
papers by a specific author. 

\subsection{The database}
The database contains the following tables and columns:
\begin{description}
 \item[Papers] with columns $\mathit{paperId}$, $\mathit{status}$, and $\mathit{session}$. The column
 $\mathit{status}$ is over the bounded domain consisting of $\mathit{undecided}$, $\mathit{accepted}$, 
 or $\mathit{rejected}$.
 The column $\mathit{session}$ ranges over the bounded domain consisting of
 $\mathit{null}$,$\mathit{blank}$,$\mathit{invited}$,$1$,$\ldots$,$k$;
 \item[PaperAuthor] with columns $\mathit{userId}$ and $\mathit{paperId}$; 
 \item[ReviewerBids] with columns $\mathit{userId}$ and $\mathit{paperId}$;
 \item[ReviewerAssignments] with columns $\mathit{userId}$ and $\mathit{paperId}$;
 \item[Conflicts] with columns $\mathit{userId}$ and $\mathit{paperId}$. 
\end{description}
The columns $\mathit{userId}$ and $\mathit{paperId}$ range over the unbounded domain. 
The key of Papers is $\mathit{paperId}$. The other tables have a many to many relationship between $\mathit{userId}$
and $\mathit{paperId}$ attesting respectively to the fact that
the user is the author of the paper, the user has bid to review the paper, the user has been assigned to review the paper,
or the user is in conflict with the paper (and therefore cannot review it). 

Before the bidding process begins, all papers are assigned the status $\mathit{undecided}$ and the 
session $\mathit{invited}$ (for an invited paper) or $\mathit{null}$ (for a contributed submission). 
The session value $\mathit{blank}$ comes up in the display function, and is at the root of a bug in the program. 

\subsection{The functions \texttt{bid}, \texttt{assign}, and \texttt{display}}

The functions \texttt{bid}, \texttt{assign}, and \texttt{display} are referred to
as \texttt{conf-bid}, \texttt{conf-assign}, and \texttt{conf-display} in Table~\ref{tab:smallExamples}. 
The code of the functions \texttt{bid}, \texttt{assign}, and \texttt{display} can be found
in Fig.~\ref{fig:conference-code}. 
The function \texttt{bid} registers that the user $\mathit{usr}$ is willing to review the paper $\mathit{ppr}$
with the sanity check that there is no conflict between the user and the paper. 
The table $A$ is either empty whenever no conflict is found, or contains the single row $ppr$
when there is a conflict. 
The function \texttt{assign} registers that the user $\mathit{usr}$ 
is assigned to review the paper $\mathit{ppr}$. 
The function \texttt{display} receives as input the user id $\mathit{usr}$
and returns the list of papers by $\mathit{usr}$ that should be displayed. 
If the review phase of the conference is not yet completed 
(i.e.\ the Boolean argument $\mathit{stillReviewing}$ has value true),
\texttt{display} removes the session values of contributed papers from the output. 
This is done to prevent leaking the information
that a contributed paper has been accepted (since only accepted papers have sessions) before the status of the paper has been announced. 
\texttt{display} leaves the status value $\mathit{invited}$ visible. 

We present two versions of \texttt{display}: one correct and one incorrect. 
\texttt{display-incorrect} leaves the session value $\mathit{null}$ unchanged. Since $\mathit{null}$
and $\mathit{blank}$
are different values, 
the information leak which the program tries to avoid is still present. 
The correct version \texttt{display-correct} differs from \texttt{display-incorrect} by
also replacing the status $\mathit{null}$ by $\mathit{blank}$. This is done by expanding the WHERE condition of the UPDATE.

\begin{figure}\tt 
\footnotesize
\noindent
bid(usr, ppr): \\
\hspace*{0.5cm} A = \SELECT paperId\FROM Papers\WHERE paperId = ppr\AND 
\NOT (paperId IN 
\\ \hspace*{0.5cm} \hspace*{0.5cm} 
(\SELECT paperId\FROM Conflicts\WHERE userId = usr));\\
\hspace*{0.5cm} if (A = empty) exit;\\
 \hspace*{0.5cm} \INSERT (usr, ppr)\INTO ReviewerBids\\
\ \\ 
\ \\
assign(usr,ppr): \\
\hspace*{0.5cm} \INSERT (usr, ppr)\INTO ReviewerAssignments\\
\ \\ 
\ \\
display-incorrect(usr, stillReviewing):\\
\hspace*{0.5cm} Output = \SELECT *\FROM Papers\WHERE paperId\IN \\
 \hspace*{0.5cm} \hspace*{0.5cm}  
 (\SELECT paperId\FROM  PaperAuthor\WHERE userId = usr);\\
\hspace*{0.5cm} if (stillReviewing = false) exit; \\
\hspace*{0.5cm} \UPDATE Output\SET session = blank\WHERE session\IN 1,...,k \\ 
\ \\ 
\ \\
display-correct(usr, stillReviewing):\\
\hspace*{0.5cm} Output = \SELECT *\FROM Papers\WHERE paperId\IN \\
 \hspace*{0.5cm} \hspace*{0.5cm}  
 (\SELECT paperId\FROM  PaperAuthor\WHERE userId = usr);\\
\hspace*{0.5cm} if (stillReviewing = false) exit; \\
\hspace*{0.5cm} \UPDATE Output\SET session = blank\WHERE session\IN 1,...,k,null \\ 
\caption{\label{fig:conference-code} The code of \texttt{bid}, \texttt{assign}, \texttt{display-incorrect}, and \texttt{display-correct}. }
\end{figure}

\subsection{The specification}

The database preserves two invariants:
\[
 \begin{array}{lll}
 \mathit{Inv}_1 & = & \forall_{\Domshort} x, y. \mathit{ReviewerBids}(x,y) \to \neg\mathit{Conflicts}(x,y)\\
 \mathit{Inv}_2 & = & \forall_{\Domshort} x, y. \mathit{ReviewerAssignments}(x,y) \to \neg\mathit{Conflicts}(x,y) 
 \end{array}
\]
These invariants state that no user may bid or be assigned to review a paper with which they are in conflict. 

The specification of \texttt{display} is as follows:
\[
\begin{array}{lll}
\pre_{\mathtt{display}} & = & \mathit{Inv}_1 \land \mathit{Inv}_2 \\
\post_{\mathtt{display}} & = & \mathit{Inv}_1 \land \mathit{Inv}_2 \land \mathit{no\mbox{-}leak}\\
\mathit{no\mbox{-}leak} &=& \mathit{stillReviewing} \to 
\left(\forall_{\Domshort} x,y.\, \mathit{Output}(x,y) \to (y = \mathit{blank} \lor y = \mathit{invited})\right)
\end{array}
\]
This specification holds for \texttt{display-correct} and does not hold for \texttt{display-incorrect}. 

For \texttt{bid} and \texttt{assign} we provide two specifications, one correct and one incorrect. 
The correct specification is as follows: 
\[
 \begin{array}{lll}
 \pre_{\mathtt{bid}}^{c}  = 
 \post_{\mathtt{bid}}^{c} = \post_{\mathtt{assign}}^{c} = \mathit{Inv_1}\land\mathit{Inv_2} \\ 
 \pre_{\mathtt{assign}}^{c}  = \mathit{Inv_1}\land\mathit{Inv_2} \land \mathit{ReviewerBids(usr,ppr)} 
 \end{array}
\]
In order to ensure that $\mathit{Inv}_2$ is preserved by \texttt{assign}, we only allow a reviewer assignment
to occur if there was a corresponding reviewer bid. Reviewer bids are required to avoid the conflicts by $\mathit{Inv}_1$,
and thus $\mathit{Inv}_2$ is preserved. 
The incorrect specification for \texttt{bid} and \texttt{assign} is as follows:
\[
 \begin{array}{lll}
 \pre_{\mathtt{bid}}^{\mathit{ic}}  &=& \mathit{Inv}_2 \\ 
 \pre_{\mathtt{assign}}^{\mathit{ic}}  &=& \mathit{Inv_2} \land \mathit{ReviewerBids(usr,ppr)}  \\ 
 \post_{\mathtt{bid}} &=& \post_{\mathtt{assign}} = \mathit{Inv_1}\land\mathit{Inv_2}
 \end{array}
\]
It is obtained from the correct specification by  omitting $\mathit{Inv}_1$ from
the preconditions. 

\section{The Firewall Example}\label{app:firewall}\label{app:network}

In this example we verify a simple firewall with respect to a simple invariant. The firewall is provided with
a database consisting of two tables: $\mathit{Device}$ and $\mathit{CanSend}$.
The table $\mathit{Device}$ consists of a single column $\mathit{deviceId}$. The table $\mathit{CanSend}$
consists of two columns $\mathit{senderId}$ and $\mathit{receiverId}$. 
The $\mathit{CanSend}$ table determines whether device A is allowed to send to device B.

The global invariant of the firewall requires that there exists a device to which every other device can send:
\[
 \mathit{Inv}_{\mathtt{firewall}} = \exists x \left(\mathit{Device}(x) \land  \forall y (\mathit{Device}(y) \to  \mathit{CanSend}(y,x))\right)
\]
We want to verify that this invariant holds when the network topology is changed. We consider the function \texttt{delete-device}: 

\ \\ 
{\tt 
\footnotesize
\noindent
delete-device(deviceToDelete): \\
\hspace*{0.5cm} \DELETE\FROM CanSend\WHERE senderId=deviceToDelete\ OR receiverId=deviceToDelete \\ 
\hspace*{0.5cm} \DELETE\FROM Device\WHERE deviceId=deviceToDelete
}
\ \\ 

Table~\ref{tab:smallExamples} refers to \texttt{delete-device}
in the rows labeled $\mathit{firewall}$. In this table, we experiment with two specifications. 
The incorrect specification is:
\[\{\mathit{Inv}_{\mathtt{firewall}}\}\ \mathtt{delete-device}\ \{\mathit{Inv}_{\mathtt{firewall}}\}\]
This specification is incorrect since it is possible that the only device which can receive messages from 
all other devices is exactly the device $\mathit{deviceToDelete}$ removed by \texttt{delete-device}. 
Our correct specification is:
\[\{\pre_{\mathtt{delete-device}} \land \mathit{Inv}_{\mathtt{firewall}}\}\ \mathtt{delete-device}\ \{\mathit{Inv}_{\mathtt{firewall}}\}\]
where
\[
 \pre_{\mathtt{delete-device}} = \exists x \left((x \not= \mathit{deviceToDelete}) 
 \land \mathit{Device}(x) \land  \forall y (\mathit{Device}(y) \to  \mathit{CanSend}(y,x))\right)
\]
This correct specification ensures that there is a device as required which is not $\mathit{deviceToDelete}$. 

\subsection{Inflated examples}\label{app:network-inflated}
As a basic test of the scalability of our approach, in Section~\ref{se:inflated} we created large examples 
by inflating small examples. Here we illustrate this on the firewall example. 
The following is the result of inflating \texttt{delete-device} with multiplier $3$. The function \texttt{delete-device3}
deletes three devices from the network:

\ \\ 
{\tt 
\footnotesize
\noindent
delete-device3(deviceToDelete1,deviceToDelete2,deviceToDelete3): \\
\hspace*{0.5cm} \DELETE\FROM CanSend\WHERE senderId=deviceToDelete1\ OR receiverId=deviceToDelete1 \\ 
\hspace*{0.5cm} \DELETE\FROM CanSend\WHERE senderId=deviceToDelete2\ OR receiverId=deviceToDelete2 \\ 
\hspace*{0.5cm} \DELETE\FROM CanSend\WHERE senderId=deviceToDelete3\ OR receiverId=deviceToDelete3 \\ 
\hspace*{0.5cm} \DELETE\FROM Device\WHERE deviceId=deviceToDelete1\\
\hspace*{0.5cm} \DELETE\FROM Device\WHERE deviceId=deviceToDelete2\\
\hspace*{0.5cm} \DELETE\FROM Device\WHERE deviceId=deviceToDelete3\\
}
\ \\ 
The specifications must be altered correspondingly, so the precondition of the correct specification is changed to:
\[
\begin{array}{lll}
 \pre_{\mathtt{delete-device3}} &=& \exists x ((x \not= \mathit{deviceToDelete}1) \land 
 (x \not= \mathit{deviceToDelete}2) \land (x \not= \mathit{deviceToDelete}3)  \\ 
 && \land \mathit{Device}(x) \land  \forall y (\mathit{Device}(y) \to  \mathit{CanSend}(y,x)))
 \end{array}
\]
The other pre- and postconditions remain unchanged, since they just consist of the invariant. 

In Table~\ref{tab:inflatedExamples}, the rows for $\mathit{firewall}$ correspond to the
inflated versions of \texttt{delete-device} with the altered specifications. 
\section{\texorpdfstring{FO\two{}}{FO2} Reasoning}\label{app:se:FO2}\label{app:se:foreasonig}

This appendix gives more detail on the implementation of our FO\two\ finite satifiability solver. 
Section~\ref{app:se:forallforall} gives explicityly the Tseitin transformation of the $\forall\forall$-conjuct from Section~\ref{se:FO2:CNF}. 
Section~\ref{app:se:efficient} discusses our use of a refined bound on the size of the maximal model
due to \cite{GKV} to improve the efficiency of the solver. 

\subsection{The \texorpdfstring{$\forall\forall$}{forall forall}-conjunct}\label{app:se:forallforall}
Here we give $B^{\forall\forall}$ explicitly. 

Let $\forall x \forall y\, \alpha'$ be the unique $\forall\forall$-conjunct of $\psi$. 
The CNF formula $B^{\forall\forall}$ will have clauses which explicitly detail the semantics of $\forall x \forall y\, \alpha'(x,y)$ on
the structure $\mathcal{A}_{\As}$. 
We associate each sub-formula $\gamma(x,y)$ in $\alpha'(x,y)$ and $\ell_1,\ell_2 \in \{1,\ldots,\bnd(\psi)\}$ with
a new Boolean variable $u^{\gamma}_{\ell_1,\ell_2}$. $B^{\forall\forall}$ will contain clauses guaranteeing that
under any assignment $\As$ satisfying $B_\psi$, 
\begin{itemize}
 \item[\ ] ($\triangle$) $S(u^{\gamma}_{\ell_1,\ell_2})=True$  iff $\mathcal{A}_{\As} \models \gamma(\ell_1,\ell_2)$.
\end{itemize}
Additionally, we add to $B^{\forall\forall}$ the clause 
$u^{\alpha'}_{\ell_1,\ell_2}$
for all $1\leq \ell_1,\ell_2 \leq \bnd(\psi)$ to assert that 
$\alpha'$ is true for all values of $x$ and $y$. 

It remains to describe the clauses which define the values of $u^{\gamma}_{\ell_1,\ell_2}$ according to ($\triangle$). We distinguish two cases, depending on whether $\gamma$
is an atom of $\psi$ or is obtained by applying a Boolean connective $\lor,\land,\neg$ on sub-formulas. Consider first the case that $\gamma(x,y) = \neg (\delta(x,y))$. 
For every $1\leq \ell_1,\ell_2\leq \bnd(\psi)$, we add the clauses 
\[\neg u^{\gamma}_{\ell_1,\ell_2} \lor \neg u^{\delta}_{\ell_1,\ell_2}\mbox{ \ \ \ and \ \ \ }
u^{\gamma}_{\ell_1,\ell_2} \lor u^{\delta}_{\ell_1,\ell_2}\]
whose conjunction is equivalent to 
$u^{\gamma}_{\ell_1,\ell_2} \leftrightarrow \neg u^{\delta}_{\ell_1,\ell_2}$. The other Boolean connectives are axiomatized similarly. 

For the case of atoms, $u^{\gamma}_{\ell_1,\ell_2}$ gets it value from one of the $v$ variables 
by choosing the indices correctly:
\begin{itemize}
 \item If $\gamma(x,y)=R(x,y)$, then for every $1\leq \ell_1,\ell_2\leq \bnd(\psi)$, $u^{R(x,y)}_{\ell_1,\ell_2}$ is assigned the same value as $v_{R,\ell_1,\ell_2}$ by adding the clauses 
 \[\begin{array}{l}
\neg u^{R(x,y)}_{\ell_1,\ell_2} \lor v_{R,\ell_1,\ell_2}\mbox{\ and \ }
u^{R(x,y)}_{\ell_1,\ell_2} \lor \neg v_{R,\ell_1,\ell_2}
\end{array}
\]
to $B_\psi$. The conjunction of these two clauses is equivalent to $u^{R(x,y)}_{\ell_1,\ell_2}  \leftrightarrow v_{R,\ell_1,\ell_2}$.
\item If $\gamma(x,y)=R(y,x)$, then $u^{R(x,y)}_{\ell_1,\ell_2}$ is assigned the same value as $v_{R,\ell_2,\ell_1}$.
\item If $\gamma(x,y)=R(x,x)$, then $u^{R(x,x)}_{\ell_1,\ell_2}$ is assigned the same value as $v_{R,\ell_1,\ell_1}$.
\item If $\gamma(x,y)=R(y,y)$, then $u^{R(y,y)}_{\ell_1,\ell_2}$ is assigned the same value as $v_{R,\ell_2,\ell_2}$.
\item If $\gamma(x,y)=U(y)$, then $u^{U(y)}_{\ell_1,\ell_2}$ is assigned the same value as $v_{U,\ell_2}$.
\item If $\gamma(x,y)=U(x)$, then $u^{U(x)}_{\ell_1,\ell_2}$ is assigned the same value as $v_{U,\ell_1}$.
\end{itemize}

\subsection{Axiomatizing models of at size at most \texorpdfstring{$\bnd(\psi)$}{bnd(psi)}}
Here we continue the discussion postponed to the appendix in  
Section~\ref{se:FO2:CNF}. Recall that
by Proposition~\ref{prop:assignment}, $B_\psi$ is satisfiable iff $\psi$ is satisfiable by a structure of cardinality  \emph{exactly} $\bnd(\psi)$.
However, \cite{GKV} guarantees only that  $\bnd(\psi)$ is an \emph{upper bound} on the cardinality of a satisfying model. 
In this appendix we explain how to construct $C_\psi$ so that it is
 satisfiable iff 
 $\psi$ is satisfiable by a structure of cardinality  \emph{at most} $\bnd(\psi)$ as follows. 

 We compute from $\psi$ a new FO\two-sentence $\Uni(\psi)$ in 
 SSNF and set $C_\psi = B_{\Uni(\psi)}$. 
 Let $\mathit{Uni}$ be a fresh unary relation name. 
 Let $\Uni(\psi)$ be:
 \[
 \forall x \forall y\, \left(\Uni_\alpha(x,y)
 \land \bigwedge_{i=1}^{m+1} F_i(x,y)\to\Uni_{\beta_i}(x,y)\right)
 \land \bigwedge_{i=1}^{m+1} \forall x \exists y\, F_i(x,y)
 \]
 with $\Uni_\alpha(x,y) = \mathit{Uni}(x) \land \mathit{Uni}(y) \to \alpha(x,y)$, $\Uni_{\beta_i }(x,y) = \mathit{Uni}(x) \land \mathit{Uni}(y) \to \beta_i(x,y)$, $i=1,\ldots,m$, and 
  $\Uni_{\beta_{m+1}}(x,y) =  \mathit{Uni}(y)$. 
  $F_{m+1}$ is used to guarantee that $\mathit{Uni}$ is non-empty. 
 Let $\voc(\Uni(\psi))$ be the vocabulary of $\Uni(\psi)$. 
 \begin{proposition} 
 Let $\mathcal{A}$ be a $\voc(\Uni(\psi))$-structure. Let $\mathcal{A}_{\Uni}$ be the substructure of $\mathcal{A}$ whose universe is $\mathit{Uni}^{\mathcal{A}}$.
  We have $\mathcal{A} \models \Uni(\psi)$ iff $\mathcal{A}_{\Uni} \models \psi$. 
 \end{proposition}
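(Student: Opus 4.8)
The plan is to read $\Uni(\psi)$ as the \emph{relativization} of $\psi$ to the fresh unary predicate $\mathit{Uni}$, augmented by the two $F_{m+1}$-conjuncts whose only purpose is to force $\mathit{Uni}^{\mathcal{A}}\neq\emptyset$, and then to establish the equivalence conjunct by conjunct. The single tool I would lean on is absoluteness of quantifier-free formulas: since $\alpha$, the $\beta_i$, and every atom $F_i(x,y)$ are quantifier-free $\voc(\psi)$-formulas, and $\mathcal{A}_{\Uni}$ interprets each symbol of $\voc(\psi)$ as the restriction of its $\mathcal{A}$-interpretation to $\mathit{Uni}^{\mathcal{A}}$, any such formula takes the same truth value in $\mathcal{A}$ and in $\mathcal{A}_{\Uni}$ on every tuple drawn from $\mathit{Uni}^{\mathcal{A}}$.

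First I would settle the non-emptiness bookkeeping, which is what makes the statement well-posed. If $\mathcal{A}\models\Uni(\psi)$, then taking any $a$ in the (non-empty) universe of $\mathcal{A}$, the conjunct $\forall x\,\exists y\,F_{m+1}(x,y)$ gives some $b$ with $F_{m+1}(a,b)$, and $F_{m+1}(x,y)\to\Uni_{\beta_{m+1}}(x,y)$ with $\Uni_{\beta_{m+1}}=\mathit{Uni}(y)$ forces $b\in\mathit{Uni}^{\mathcal{A}}$; hence $\mathit{Uni}^{\mathcal{A}}\neq\emptyset$ and $\mathcal{A}_{\Uni}$ is a genuine structure. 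In the other direction the very assertion $\mathcal{A}_{\Uni}\models\psi$ presupposes that $\mathcal{A}_{\Uni}$ has a non-empty universe, i.e.\ $\mathit{Uni}^{\mathcal{A}}\neq\emptyset$, which is exactly what is needed to satisfy the $F_{m+1}$-conjuncts on the $\mathcal{A}$-side.

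For the $\forall\forall$-conjunct the argument is routine. The relativizing antecedent $\mathit{Uni}(x)\land\mathit{Uni}(y)$ inside $\Uni_\alpha$ and each $\Uni_{\beta_i}$ makes the outer $\forall x\forall y$ effectively quantify only over pairs from $\mathit{Uni}^{\mathcal{A}}$, on which the relativized matrix collapses to $\alpha\land\bigwedge_{i=1}^m(F_i\to\beta_i)$; by absoluteness this is exactly the $\forall\forall$-conjunct of $\psi$ evaluated in $\mathcal{A}_{\Uni}$, while pairs with some coordinate outside $\mathit{Uni}^{\mathcal{A}}$ satisfy the relativized matrix vacuously and so contribute nothing on either side.

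The main obstacle is the $\forall\exists$-conjuncts, because SSNF keeps the body of each existential as the bare atom $F_i(x,y)$, so its quantifier is not syntactically confined to $\mathit{Uni}$. The heart of the proof is to show that, over $\mathit{Uni}^{\mathcal{A}}$, each such conjunct is equivalent to the relativized form $\forall x\,(\mathit{Uni}(x)\to\exists y\,(\mathit{Uni}(y)\land F_i(x,y)))$, which is literally the statement $\mathcal{A}_{\Uni}\models\forall x\exists y\,F_i(x,y)$. For $i=m+1$ this is immediate from $\Uni_{\beta_{m+1}}=\mathit{Uni}(y)$, which pins every $F_{m+1}$-witness inside $\mathit{Uni}^{\mathcal{A}}$. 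The delicate point is $1\le i\le m$: one must verify that the matrix conjunct $F_i(x,y)\to\Uni_{\beta_i}(x,y)$ is strong enough to force $\mathit{Uni}(y)$ on the witness of every $x\in\mathit{Uni}^{\mathcal{A}}$ (paralleling the role of $\Uni_{\beta_{m+1}}$), so that the unrelativized $\forall x\exists y\,F_i(x,y)$ of $\Uni(\psi)$ faithfully encodes the relativized existential demanded by $\mathcal{A}_{\Uni}$; for $x\notin\mathit{Uni}^{\mathcal{A}}$ the $\Uni_{\beta_i}$-constraints are vacuous, so both directions of the equivalence then fall out. Once this witness-confinement is in place, conjoining the per-conjunct equivalences yields $\mathcal{A}\models\Uni(\psi)$ iff $\mathcal{A}_{\Uni}\models\psi$.
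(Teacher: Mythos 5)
The paper states this proposition without proof, so the only question is whether your argument stands on its own; it does not, because the one step you explicitly defer is precisely the step that fails. You write that ``one must verify that the matrix conjunct $F_i(x,y)\to\Uni_{\beta_i}(x,y)$ is strong enough to force $\mathit{Uni}(y)$ on the witness,'' but with the definition $\Uni_{\beta_i}(x,y)=\mathit{Uni}(x)\land\mathit{Uni}(y)\to\beta_i(x,y)$ it is not: the implication is satisfied vacuously whenever $y\notin\mathit{Uni}^{\mathcal{A}}$, so nothing confines the $F_i$-witnesses of points of $\mathit{Uni}^{\mathcal{A}}$ to $\mathit{Uni}^{\mathcal{A}}$ for $1\le i\le m$ (only $F_{m+1}$ has its witnesses pinned, via $\Uni_{\beta_{m+1}}=\mathit{Uni}(y)$). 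Concretely, take $m=1$, $\alpha=\beta_1=(x=x)$, and $\mathcal{A}$ with universe $\{1,2\}$, $\mathit{Uni}^{\mathcal{A}}=\{1\}$, $F_1^{\mathcal{A}}=\{(1,2),(2,2)\}$, $F_2^{\mathcal{A}}=\{(1,1),(2,1)\}$: then $\mathcal{A}\models\Uni(\psi)$, yet $F_1^{\mathcal{A}_{\Uni}}=\emptyset$, so $\mathcal{A}_{\Uni}\not\models\forall x\exists y\,F_1(x,y)$. The witness confinement you need holds only if $\Uni_{\beta_i}$ is strengthened to $\mathit{Uni}(x)\to(\mathit{Uni}(y)\land\beta_i(x,y))$ (equivalently, if the conjuncts $F_i(x,y)\land\mathit{Uni}(x)\to\mathit{Uni}(y)$ are added); under that reading your relativization argument for the left-to-right direction does go through.

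The right-to-left direction has a second gap that your non-emptiness bookkeeping does not close. From $\mathcal{A}_{\Uni}\models\psi$ you must derive $\mathcal{A}\models\bigwedge_i\forall x\exists y\,F_i(x,y)$, and these conjuncts range over \emph{all} of $\mathcal{A}$, including elements outside $\mathit{Uni}^{\mathcal{A}}$; the hypothesis says nothing about whether such elements have $F_i$-successors in the given structure. For instance, with universe $\{1,2\}$, $\mathit{Uni}^{\mathcal{A}}=\{1\}$ and $F_1^{\mathcal{A}}=F_2^{\mathcal{A}}=\{(1,1)\}$, one has $\mathcal{A}_{\Uni}\models\psi$ but $\mathcal{A}\not\models\forall x\exists y\,F_1(x,y)$. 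So the biconditional over a fixed arbitrary $\mathcal{A}$ is false in this direction even after the repair above. What the reduction to $C_\psi=B_{\Uni(\psi)}$ actually requires, and what is provable, is an asymmetric pair of claims: every model of $\Uni(\psi)$ induces a model of $\psi$ on $\mathit{Uni}^{\mathcal{A}}$, and conversely every model of $\psi$ of cardinality at most $\bnd(\psi)$ \emph{extends to some} model of $\Uni(\psi)$ of cardinality exactly $\bnd(\psi)$ (pad the universe outside $\mathit{Uni}$ and give every element arbitrary $F_i$-successors, choosing the $F_{m+1}$-successors inside the non-empty set $\mathit{Uni}$). Your proof should be reorganized around those two one-sided statements rather than a conjunct-by-conjunct equivalence.
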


\subsection{An efficient finite satisfiability algorithm}\label{app:se:efficient}

The algorithm from Section~\ref{se:bounded} on which our algorithm in Section~\ref{se:sat} is based was written
from a theoretical point of view aiming to simplify the proof of the NEXPTIME-completeness of the finite satisfiability problem. 
In this section, we introduce several optimizations which, while not affecting the complexity of the problem, 
improve the performance of our satisfiability solver. 

\subsubsection{A refined upper bound}
In the course of the proof of the bounded model property,~\cite{GKV} give a more refined version of the upper bound on the size 
of a minimal satisfying model. This more refine version leads to smaller upper bounds in many cases. 
To state the refined upper bound we need some definitions. 
\begin{definition} {\bf ($1$-types and kings)}~
\begin{enumerate}
 \item A $1$-type $t(x)$ is a maximally consistent set of atomic formulas and their negations which do not have $y$ as a free variable. 
 \item For a structure $\mathcal{A}$ and an element $a$ of the universe of $\mathcal{A}$, the $1$-type of $a$ in $\mathcal{A}$
 is the unique $1$-type $t(x)$ such that $\mathcal{A}\models t(a)$. We say that $a$ \emph{realizes $t$ in $\mathcal{A}$}.  
 \item Given a structure $\mathcal{A}$ and an element $a$ of $\mathcal{A}$, $a$ is a \emph{king} in $\mathcal{A}$ if there is no other
 element in $\mathcal{A}$ with the same $1$-type as $a$. 
\end{enumerate}
\end{definition}
For example, for a vocabulary consisting of one binary relation name $R$ and one constant name $c$, the following is a $1$-type:
\[
\begin{array}{ll}
 \{ R(c,c), R(x,c), \neg R(c,x), R(x,x), \\ \neg (x = c), \neg (c = x), (x = x), (c = c) \}
 \end{array}
\]

\begin{lemma}[\cite{GKV}, Theorem 4.3]\label{app:lem:refined-bound}
 Let $\psi$ be a sentence in Skolemized Scott Normal Form and let $m$ be the number of conjuncts of the form $\forall x \exists y\, \beta_i$ as in Eq.~(\ref{eq:SSNF}). 
 Let $\mathcal{A}$ be a structure satisfying $\psi$. Let $K$
 be the set of all kings in $\mathcal{A}$ and let $P$ be the set of $1$-types realized in $\mathcal{A}$. There is 
 a structure of cardinality at most 
 \begin{gather}
 \label{app:eq:refined-bound}
  (m+1)|K|+3m(|P| - |K|)
 \end{gather}
 which satisfies $\mathcal{A}$. 
\end{lemma}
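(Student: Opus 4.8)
The plan is to follow the model-surgery argument of~\cite{GKV}: from the given $\mathcal{A}\models\psi$ I would excise a small model by retaining only the elements forced to be present and replacing all others by a bounded number of interchangeable copies. The two features of $\psi$ to be preserved are its single $\forall x\forall y$-conjunct, say $\forall x\forall y\,\alpha'(x,y)$, which is a \emph{universal} constraint on the $2$-type of every ordered pair, and the $m$ conjuncts $\forall x\exists y\,F_i(x,y)$, each requiring every element to have a \emph{witness} for the $i$-th Skolem relation. The data I would track are the $1$-type of each element and, for each element $a$ and index $i$, a fixed witness $w_i(a)$ with $\mathcal{A}\models F_i(a,w_i(a))\land\beta_i(a,w_i(a))$. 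I would invoke Proposition~\ref{prop:skolemized-scott} so that the existential conjuncts have exactly this clean shape.

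First I would split the realized $1$-types $P$ into the $|K|$ \emph{king} types, each realized by a single element, and the $|P|-|K|$ \emph{common} types, each realized by at least two. The kings must be kept verbatim, since nothing can substitute for a king when it is used as a witness. The \emph{court} then consists of the kings together with, for each king $k$, the witnesses $w_1(k),\dots,w_m(k)$ it requires: at most $m+1$ elements per king, i.e.\ $(m+1)|K|$ in total. For the remaining types I would discard every original realizer and introduce, for each common $1$-type, exactly $3m$ fresh copies, contributing $3m(|P|-|K|)$; summing gives the target cardinality.

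The substance of the proof is defining the $2$-type of every ordered pair on this new universe so that (i) each pair satisfies $\alpha'$ and (ii) every element recovers all $m$ witnesses. For a pair meeting a king I would copy the $2$-type the king had to the relevant realizer in $\mathcal{A}$; this preserves $\alpha'$ and every witnessing role the king played or received, and it lets the non-king court witnesses draw their own witnesses from the common domain, so the regress terminates. The delicate case is witnessing \emph{among common elements}: a copy $a$ of common type $t$ whose $\mathcal{A}$-witness $w_i(a)$ had type $s$ must now be joined, with that same $2$-type, to a copy of $s$, and $a$ may not be forced to witness itself. This is precisely why three copies are taken per type and per existential: it gives each copy enough distinct same-type partners to absorb all witness edges directed at type $t$ without self-loops and without two incompatible demands colliding on a single ordered pair.

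The main obstacle, and where I would concentrate, is the consistency of this assignment: showing that the $2$-type chosen to give $a$ its $i$-th witness never conflicts with the $2$-type already dictated by $\alpha'$ or by another witnessing obligation on the same pair, and that the $m$ witnesses of each common element can always be routed into the $3m$ available copies. Granting this routing lemma, the remainder is routine bookkeeping: by construction every pair satisfies $\alpha'$ and every element has its $m$ witnesses, so the structure models $\psi$, and its cardinality is the claimed $(m+1)|K|+3m(|P|-|K|)$.
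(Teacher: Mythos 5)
First, note that the paper does not actually prove this lemma: it is imported verbatim from \cite{GKV} (their Theorem~4.3), so the only meaningful comparison is with the proof given there. Your outline reproduces the skeleton of that proof correctly: keep the kings together with their $m$ witnesses as a ``court'' of size at most $(m+1)|K|$, replace the realizers of each non-royal $1$-type by $3m$ fresh copies, route witness edges cyclically among three groups to avoid self-witnessing and colliding obligations, and define the remaining $2$-types by copying from $\mathcal{A}$. The counting and the reason the constant is $3$ (rather than $2$) are correctly identified.

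As a proof, however, the text is incomplete exactly where you admit it is: the ``routing lemma'' is the entire mathematical content of the statement, and you assert it rather than establish it. Two steps in particular must be carried out. (i) The reason three groups suffice is that if every fresh element of group $j$ takes its $i$-th witness from the designated same-column copy in group $j+1 \bmod 3$, then for any ordered pair $(a,b)$ of fresh elements at most one of the two directions carries a witness obligation (this is where two groups would fail, since $j+1\equiv j'$ and $j'+1\equiv j$ cannot both hold mod $3$) and no element is ever its own witness; without this, ``enough distinct same-type partners to absorb all witness edges'' is a claim, not an argument. (ii) For every pair carrying no witness obligation you must still choose a $2$-type and verify both $\alpha$ and the implications $F_i(x,y)\to\beta_i(x,y)$; the standard move is to copy the $2$-type of a pair of \emph{distinct} realizers of the corresponding $1$-types in $\mathcal{A}$ (such a pair exists because non-royal types have at least two realizers), which inherits the whole $\forall\forall$-conjunct from $\mathcal{A}$. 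Finally, ``discard every original realizer'' of common types is slightly off: the witnesses of kings may themselves have common types and are retained inside the court, which is precisely why they are charged to the $(m+1)|K|$ term rather than to $3m(|P|-|K|)$. With these points filled in, your argument coincides with the proof in \cite{GKV}.
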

The bound in Lemma~\ref{app:lem:refined-bound} requires already having a model of $\psi$. However, we can use it to get a bound based on syntactic considerations only. 
For $m\geq 1$, Eq.~(\ref{app:eq:refined-bound}) can be bounded from above by the sum over all $1$-types $t(x)$, such that $t(x)$ contributes $(m+1)$ if $t(x)$ 
contains $x = c$ for some $c\in \const(\psi)$, and $3m$ otherwise. Note that the number of $1$-types containing $x=c$ for some $c\in \const(\psi)$ in any one structure
is at most $|\const(\psi)|$. Hence, Eq.~(\ref{app:eq:refined-bound}) is at most:
\begin{gather}\label{app:eq:my-bound}
\begin{array}{ll}
  \displaystyle{|\const(\psi)|\, (m+1) + \sum_{ t(x)\not \models \bigvee_{c\in\const(\psi)} x=c} 3m}
  \end{array}  
\end{gather}
Notice that this bound does not depend on whether the $1$-types in the sum
are realized in any structure. It is correct since (1) any $1$-type which implies that $x=c$ for any
constant is necessarily a king and thus contributes $m+1$, while (2) any other $1$-type, which may or may not be a king, contributes at most $\max(m+1,3m) = 3m$. 

We can now augment the algorithm in Section~\ref{se:sat} to use the refined bound from Eq.~(\ref{app:eq:my-bound}). 
Our algorithm uses Proposition~\ref{prop:skolemized-scott}
to transform a FO\two-sentence $\phi$ into Scott Normal Form, which adds a new relation name for every quantifier and every connective 
in $\phi$. This comes at a heavy cost to performance, since the number of $1$-types summed over in Eq.~(\ref{app:eq:my-bound})
 is exponential in the number of relation and constant names in Scott Normal Form of $\phi$.  
In this section we provide a more economic procedure for this purpose, which introduces new relation names only as a last resort. 

Given a FO\two-sentence $\phi$, we construct a sentence $\psi$ in Scott Normal Form
such that $\phi$ and $\psi$ are satisfiable by models of the same cardinalities. 
We construct a sequence of pairs $(\phi_k,\psi_k)$, $1\leq k \leq r$ as follows.
The sequence is built according to the process described below. The length $r$ of the sequence is
determined by applying the process until no further steps can be applied. The sequence satisfies:
\begin{itemize}
 \item[--] $\phi_0$ is the Negation Normal Form\footnote{An FO formula is in Negation Normal Form (NNF)
 if the scope of every negation symbol $\neg$ is an atom. It is well-known that for every 
 FO formula $\gamma$,  an equivalent formula $\gamma'$ in NNF can be computed in linear time. 
 } of $\phi$ and \\
$\psi_0 = \forall x \forall y\, True$,
\item[--] $\phi_r = True$ and $\psi_r = \psi$,
\item[--] $\psi_k$ is in Scott Normal Form for all $1\leq k\leq r$, and
\item[--] the sets of cardinalities of the models of $\phi_k \land \psi_k$
are equal for all $1\leq k \leq r$.

\end{itemize}
Given $(\phi_k,\psi_k)$ we compute $(\phi_{k+1}, \psi_{k+1})$ iteratively  as follows:  
\begin{enumerate}
 \item\label{app:item:forall-1} If $\phi_k$  is in one of the forms:
 \[
  \begin{array}{ll}
  \forall x Q y\, \epsilon\ \ \ \ \ \ \forall x\, \epsilon\\
  \forall y Q x\, \epsilon\ \ \ \ \ \ \forall y\, \epsilon 
  \end{array}
 \]
 where $\epsilon$ is quantifier-free and $Q$ is a quantifier, i.e.\ $Q\in\{\exists,\forall\}$, then $\psi_{k+1}$ is obtained from $\psi_k$ as follows.
 If $Q = \forall$, $\psi_{k+1}$ is obtained by adding 
 $\epsilon$ as a new conjunct inside the quantifiers $\forall x \forall y $. If $Q = \exists$, $\psi_{k+1}$ is obtained by adding a new conjunct
 $\forall x \exists y \, \epsilon$ or $\forall y \exists x \, \epsilon$ to $\psi_k$. 
 We set  $\phi_{k+1}$ to True. We end the iteration by setting $r$ to $k+1$,
 \item\label{app:item:forall-2} If $\phi_k$ is a conjunction in which one of the conjuncts $\gamma$ is of one of the forms in the previous item, 
 then $\psi_{k+1}$ is obtained from $\psi_k$ as in the previous item and $\phi_{k+1}$ is obtained from $\phi_k$ by removing $\gamma$ from the conjunction. 
 \item\label{app:item:existential} If there is an existential quantifier not in the scope of any other quantifier in $\phi_k$, then $\phi_{k+1}$ is obtained by removing the existential quantifier
 and replacing all the occurrences of the quantified variable bound to this quantifier with one fresh constant name. We set $\psi_{k+1} = \psi_k$. 
 Note for the correctness of this step that we are using here that the formulas are in Negation Normal Form, i.e.\ this existential quantifier may be in the scope of the $\land$ and $\lor$ operators only. 
 \item\label{app:item:recursive}  If $\phi_k$ is of one of the following forms, or 
 if $\phi_k$ is a conjunction in which one of the conjuncts $\gamma$ is of one of the following forms:
 \[
  \begin{array}{ll}
  \forall z_1\, ((Q z_2\, \delta_1) \lor \delta_2)\\
  \forall z_1\, ((Q z_2\, \delta_1) \land \delta_2)\\
  \forall z_1\, (\delta_1 \lor (Q z_2\, \delta_2))\\
  \forall z_1\, (\delta_1 \land (Q z_2\, \delta_2))
  \end{array}
 \]
 where $Q\in \{\exists,\forall\}$, $z_1,z_2 \in \{x,y\}$ and $z_1 \not=z_2$, $\phi_{k+1}$ is obtained by taking the quantifier $Q z_2$
 out of the scope of the Boolean connective. E.g., we substitute $\forall z_1 ((Q z_2 \delta_1) \lor \delta_2)$ with 
 $\forall z_1 Q z_2 (\delta_1 \lor \delta_2)$. We set $\psi_{k+1} = \psi_k$. Note for the correctness of this step
 that the $\delta_j$ not in the scope of $Q$ does not have $z_2$ as a free variable, since we are dealing with sentences. 
 \item\label{app:item:qf} If $\phi_k$ is quantifier-free, $\psi_{k+1}$ is obtained from $\psi_k$ by adding $\phi_k$ as a new conjunct inside the quantifiers
 $\forall x \forall y $ in $\psi_k$. We set $\phi_{k+1}$ to True and end the process by setting $r=k+1$.
 \item\label{app:item:inner-most} If none of the previous items applied to $\phi_k$ in this iteration, we eliminate one 
 quantifier from $\phi_k$ in spirit of the discussion before Proposition 3.1 in~\cite{GKV}. 
 Let $\gamma$ be 
 a sub-formula of $\phi_k$ of the form $Q z\, \delta$, where $Q\in \{\exists,\forall\}$, $z\in \{x,y\}$ and $\delta$ is quantifier-free. 
 Let $\bar{z} \in \{x,y\}$ such that $z\not=\bar{z}$. 
 The sentence $\phi_{k+1}$ is obtained by substituting $\gamma$ by $E(\bar{z})$ in $\phi_k$, where $E$ is a fresh unary relation name. 
 Let $\theta = \forall \bar{z}\,((Q z \delta)\leftrightarrow E(\bar{z}))$. The sentence $\theta$ says that $E$ is interpreted as the set of 
 universe elements $u$ for which $Q z \delta$ holds.  $\theta$ is equivalent to the conjunction of $\theta_{\forall \forall}$ and $\theta_{\forall \exists}$ 
 such that $\theta_{\forall \forall}$ is of the form $\forall x \forall y\, \theta'_{\forall\forall}$  and 
 $\theta_{\forall \exists}$ is of the form $\forall x \exists y\, \theta'_{\forall\exists}$, and $\theta'_{\forall\forall}$ and $\theta'_{\forall \exists}$ 
 are quantifier-free. 
 Let $\psi_{k+1}$ be obtained by adding the conjunct $\theta_{\forall \exists}$ to $\phi_k$
 and adding $\theta'_{\forall \forall}$ as a new conjunct inside the $\forall\forall$-conjunct of $\phi_k$. 
 
\end{enumerate}

Only item~\ref{app:item:inner-most} increases the number of names, so it is only used when no other item applies. 
The procedure terminates because every item, except for item~\ref{app:item:recursive}, removes a quantifier when going from $\phi_k$ to $\phi_{k+1}$,
 and whenever item~\ref{app:item:recursive} is applied, in the next iteration either item~\ref{app:item:forall-1} or item~\ref{app:item:forall-2} will be applied. 

%
%
%
%

%
%

\subsubsection{Ruling out unfeasible \texorpdfstring{$1$}{1}-types}

Up until now, we have bounded the number of $1$-types which are realized in some structure $\mathcal{A}$ satisfying $\phi$ with
the number of all $1$-types. However, it is possible to determine that some $1$-types are not feasible in any structure satisfying $\phi$
and subtract them from the upper bound. 

Recall the table $\mathit{\nwlEmail}$ from the running example in Section~\ref{se:running}. $\mathit{\nwlEmail}(x,y)$ expresses that user $y$ is subscribed to
newsletter $x$. It is natural that the requirement that users and newsletters are disjoint
\[
 \forall x\, \left((\exists y\, \mathit{\nwlEmail}(x,y)) \to (\forall y\, \neg  \mathit{\nwlEmail}(y,x))\right)
\]
 is part of the database invariant. 
Hence, any $1$-type containing $\mathit{\nwlEmail}(x,x)$  is unfeasible. 

Let $\psi$ be a sentence in Scott Normal Form such that $\psi = \psi_{\universal} \land \psi_{\existential}$,
where $\psi_{\universal} = (\forall x\forall y\,\alpha)$ and $\psi_{\existential}$ is a conjunction of terms of the form $\forall x \exists y\, \beta_i$.
Any $1$-type which is not feasible for $\psi_{\universal}$ is certainly not feasible for $\psi$. Since $\psi_{\universal}$
is a universal FO sentence, it adheres to a classical property of universal FO: the class of models of $\psi_{\universal}$ is closed under taking substructures. 
This implies that any feasible $1$-type of $\psi_{\universal}$ occurs in a structure of cardinality $1$. 
We count the number of unfeasible $1$-types which do not contain $x = c$ for any constant name by iterating over all 
such $1$-types $t(x)$ and checking satisfiability of $\forall x\, t(x) \land \psi_{\universal}$ by a structure of cardinality $1$.

\subsubsection{Improved satisfiability testing}

The size of the input and the run time of the SAT solver depend on the size of the upper bound $\bnd(\psi)$ on the cardinality of the minimal satisfying model, if such a model exists. 
To guarantee that $\psi$ is unsatisfiable, one must verify that no satisfying model up to that cardinality exists. In contrast, to guarantee that 
$\psi$ is satisfiable, it is enough to find a satisfying model, which may be much smaller than the bound. Hence, instead of performing 
the satisfiability test for models of cardinality up to $\bnd(\psi)$, we iteratively search for models whose cardinalities $1=a_0,\ldots,a_e\leq \bnd(\psi)$ increase exponentially 
up to $\bnd(\psi)$. For each $a_i$, we construct a CNF formula $C_{a_i}$ similar to $C_\psi$, only replacing the maximal cardinality $\bnd(\psi)$ with $a_i$, and apply the SAT solver
to $C_{a_i}$. We have that $a_{i} = 2 a_{i-1}$ for every $2 \leq i < e$ and $a_e = \min (\bnd(\psi), 2a_{e-1})$. 
The search continues until either we reach the index $e$ for which $a_e = \bnd(\psi)$ or $C_{a_e}$ is satisfiable, and the algorithm returns the truth-value
of $C_{a_e}$. This procedure is very often much faster for satisfiable inputs $\psi$. For unsatisfiable inputs, the run time increase is negligible, since 
the formulas $C_{a_0},\ldots,C_{a_{e-1}}$ are exponentially smaller than $C_{a_e}=C_\psi$.

%
%
%

\end{document}